  \newtheorem{theorem}{Theorem}
  \newtheorem{lemma}{Lemma}
\newtheorem{corollary}{Corollary}
\newtheorem{fact}{Fact}
\newtheorem{definition}{Definition}
\newtheorem{example}{Example}
\newtheorem{remark}{Remark}
\providecommand{\customgenericname}{}
\newcommand{\newcustomtheorem}[2]{%
  \newenvironment{#1}[1]
  {%
   \renewcommand\customgenericname{#2}%
   \renewcommand\theinnercustomgeneric{##1}%
   \innercustomgeneric
  }
  {\endinnercustomgeneric}
}
\newcommand\numberthis{\addtocounter{equation}{1}\tag{\theequation}}
\definecolor{darkblue}{rgb}{0.1,0.1,0.8}
\definecolor{DarkGreen}{rgb}{0,0.6,0}
\definecolor{brickred}{rgb}{0.8, 0.25, 0.33}
\definecolor{britishracinggreen}{rgb}{0.0, 0.26, 0.15}
\definecolor{calpolypomonagreen}{rgb}{0.12, 0.3, 0.17}
\definecolor{ao(english)}{rgb}{0.0, 0.5, 0.0}
	\definecolor{cadmiumgreen}{rgb}{0.0, 0.42, 0.24}
\definecolor{burgundy}{rgb}{0.5, 0.0, 0.13}
\newcommand{\addv}[3]{%
	\iftoggle{Track}{%
    	\IfEqCase{#1}{%
       	 	{a}{\ifthenelse{\equal{#2}{ON}}{{\color{cadmiumgreen}#3}}{#3}}%
        	{b}{\ifthenelse{\equal{#2}{ON}}{{\color{brickred}#3}}{#3}}%
       		{c}{\ifthenelse{\equal{#2}{ON}}{{\color{burgundy}#3}}{#3}}%
    	}[\PackageError{tree}{Undefined option to tree: #1}{}]%
	}{#3}%
}
\newcounter{relctr} %
\everydisplay\expandafter{\the\everydisplay\setcounter{relctr}{0}} %
\newcommand{\etal}{\textit{et al. }}
\global\long\def\RR{\mathbb{R}}
\global\long\def\CC{\mathbb{C}}
\global\long\def\NN{\mathbb{N}}
\global\long\def\ZZ{\mathbb{Z}}
\global\long\def\EE{\mathbb{E}}
\global\long\def\PP{\mathbb{P}}
\global\long\def\11{\mathbbm{1}}
\newcommand{\bfa}{\mathbf{a}}
\newcommand{\bfb}{\mathbf{b}}
\newcommand{\bfe}{\mathbf{e}}
\newcommand{\bfr}{\mathbf{r}}
\newcommand{\bfs}{\mathbf{s}}
\newcommand{\bft}{\mathbf{t}}
\newcommand{\bfu}{\mathbf{u}}
\newcommand{\bfv}{\mathbf{v}}
\newcommand{\CA}{\mathcal{A}}
\newcommand{\CG}{\mathcal{G}}
\newcommand{\CM}{\mathcal{M}}
\newcommand{\CP}{\mathcal{P}}
\newcommand{\CS}{\mathcal{S}}
\newcommand{\CU}{\mathcal{U}}
\global\long\def\+{\oplus}
\newcommand{\prob}[1]{\PP\Big\{  #1 \Big\} }
\def\<{\langle}
\def\>{\rangle}
  \renewcommand{\var}{\mathsf{var}}
  \newcommand{\var}{\mathsf{var}}
 \newcommand{\abs}[1]{\lvert#1\rvert}
 \newcommand{\norm}[1]{\lVert#1\rVert}
  \renewcommand{\set}[1]{\left\{#1\right\}}
  \newcommand{\set}[1]{\left\{#1\right\}}
  \newcommand{\poly}{\optfont{poly}}
\DeclareMathOperator*{\tensor}{\otimes}
\providecommand{\tr}{tr}
  \renewcommand{\Tr}[1]{\tr \Big\{#1\Big\}}
  \newcommand{\Tr}[1]{\tr \Big\{#1\Big\}}
 \def\id{I_d}
\def\Loss{\mathcal{L}}
\def\parameter{\overrightarrow{a}}
\def\dimH{\text{dim}_{\mathcal{H}}}
\newcommand{\optfont}[1]{\mathsf{#1}}
\def\qps{\sigma^\bfs}
\def\qpr{\sigma^\bfr}
\def\qas{a_{\bfs}}
\def\qar{a_{\bfr}}
\newcommand\ordersign[1]{\delta(#1)}
\newcommand{\ad}{\optfont{ad}}
\newcommand{\GL}{\optfont{GL}}
\newcommand{\g}{\optfont{g}}
\def\su{\mathfrak{su}}
\newacro{ptp}[PtP]{Point-to-Point}
\newacro{iid}[i.i.d.]{independent and identically distributed} 
\newacro{IID}[i.i.d.]{independent and identically distributed} 
\newacro{PAC}[PAC]{\textit{probably approximately correct}}
\newacro{VC}[VC]{Vapnik–Chervonenkis}
\newacro{ERM}[ERM]{\textit{empirical risk minimization}}
\newacro{SVM}[SVM]{support-vector machine}
\newacro{SGD}[SGD]{stochastic gradient descent}
\newacro{POVM}[POVM]{positive operator-valued measure}
\newacro{QPAC}[QPAC]{\textit{quantum probably approximately correct}}
\newacro{QSRM}[QSRM]{\textit{quantum shadow risk minimization}}
\newacro{QC}[QC]{quantum computer}
\newacro{ML}[ML]{machine learning}
\newacro{QML}[QML]{quantum-enhanced machine learning}
\newacro{NISQ}[NISQ]{noisy intermediate-scale quantum}
\newacro{VQA}[VQA]{variational quantum algorithm}
\newacro{VQE}[VQE]{variational quantum eigensolver}
\newacro{QAOA}[QAOA]{quantum approximate optimization algorithm}
\newacro{PQC}[PQC]{parameterized quantum circuit}
\newacro{PSR}[PSR]{parameter shift rule}
\newacro{HVA}[HVA]{Hamiltonian variational ansatz}
\newacro{DLA}[DLA]{dynamical Lie algebra}
\newacro{QNN}[QNN]{quantum neural network}
\newacro{QSGD}[QSGD]{quantum stochastic gradient descent}
\newacro{RQSGD}[RQSGD]{randomized quantum stochastic gradient descent}
\newacro{QGD}[QGD]{quantum gradient descent}
\newacro{QSS}[QSS]{quantum shadow sampling}
\newacro{CST}[CST]{classical shadow tomography}
\def\STATE{\State}
\begin{document}

\title{Efficient Gradient Estimation for Parameterized Quantum Systems with Lie Algebraic Symmetries}
\author{Mohsen Heidari}
\affiliation{Department of Computer Sciences, Indiana University, Bloomington, IN, USA}
\email{mheidar@iu.edu}
\homepage{https://homes.luddy.indiana.edu/mheidar/}
\thanks{\\
Portions of this work have been published in the 2025 Conference on Neural Information Processing Systems (NeurIPS 2025) \citep{Heidari2025a}. This manuscript substantially extends the previous publication by adding new results on general Hamiltonians (see Section \ref{sec:Extension}) and added omitted proofs of previous claims.}
\orcid{0000-0002-0012-2900}
\author{Masih Mozakka}
\affiliation{Department of Computer Sciences, Indiana University, Bloomington, IN, USA}
\author{Wojciech Szpankowski}
\affiliation{Department of Computer Sciences, Purdue University,    West Lafayette, IN, USA}

\maketitle

\begin{abstract}
	Gradient estimation is a central challenge in training parameterized quan-
tum circuits (PQCs) for hybrid quantum-classical optimization and learning problems. This difficulty arises from several factors,  including the exponential dimensionality of the Hilbert spaces and the information loss in quantum measurements.  Existing estimators, such as finite difference and the parameter shift rule, often fail to adequately address these challenges for certain classes of PQCs. In this work, we propose a novel gradient estimation framework that leverages the underlying Lie algebraic structure of PQCs, combined with the Hadamard test. By analyzing the differential of the matrix exponential, we derive an expression for the gradient as a linear combination of expectation values obtained via Hadamard tests. The coefficients in this decomposition depend solely on the circuit's parameterization and  can be estimated using state-of-the-art shadow tomography techniques. Hence, our approach enables efficient gradient estimation, requiring a number of measurement shots that scales logarithmically with the number of parameters, and with polynomial classical and quantum time. This is an exponential reduction in the measurement cost and a polynomial speed-up in time compared to existing works.
\end{abstract}

\section{Introduction}
Hybrid quantum-classical strategies have emerged as a leading approach for quantum optimization and learning  \citep{Benedetti2019,Cerezo2021}, and have been extensively studied across a broad range of domains, including optimization \citep{Farhi2014}, quantum chemistry \citep{Jones2019,Anselmetti2021,Grimsley2019,Delgado2021}, and quantum machine learning from classical and quantum data \citep{Farhi2018,Schuld2019,Mitarai2018,Liu2018,Havlicek2019,HeidariQuantum2021,Heidari2024,Heidari2023Quantum1,Huang2021}. \Ac{VQA} particularly has been a promising paradigm for quantum learning and inference, where a \ac{PQC} (a.k.a ansatz) is trained in a classical-quantum loop.  Gradient-based training methods have gained significant attention in the literature \cite{Harrow2021,Sweke2020,Schuld2018,Farhi2014,Farhi2018,Schuld2019,Mitarai2018} and have demonstrated  advantages in convergence rates compared to gradient-free methods. 

However, estimation of the gradient can be computationally challenging due to several factors including the exponential dimensionality of the Hilbert spaces, the no-cloning, information loss of quantum measurements, and non-commutativity of Hamiltonian terms. Therefore, each gradient estimation can have an exponential sample complexity leading to a high overhead and hence a bottleneck for the scalability of gradient-based VQAs. 

Several approaches have been introduced to estimate the gradient \cite{Farhi2018,Mitarai2018,Sweke2020,HeidariAAAI2022,Harrow2021,Schuld2018,Mitarai2019,Wiersema2024,Sohail2024}; but they often yield suboptimal gradient circuits for certain PQCs. Methods based on finite differences evaluate the objective function in the neighborhood of the parameters. They can be applied to general \acp{PQC}, but suffer from a slow convergence rate \cite{Harrow2021}. The well-known \ac{PSR} \cite{Schuld2018,Mitarai2018} relies on the Hadamard test with Pauli operators to estimate the partial derivatives. The Hadamard test is an efficient method that directly measures the partial derivatives and does not have the numerical instability of indirect methods such as finite differences.
 However, \ac{PSR} with Hadamard test is restricted to ans\"atze with two distinct eigenvalues. It can be adapted for more complex circuits via backpropagation, but it comes with high computational costs in terms of gate decomposition. Other existing methods often apply to more general circuits but have high overhead due to the extensive use of the ansatz with repeated measurements, and exponential classical computation \cite{Banchi2021,Wierichs2022,Theis2023}.

Lie algebraic structures in \acp{PQC} have been increasingly important in analysis and design of hybrid quantum-classical strategies. Ans\"atze that have  dynamical Lie algebra (DLA) with  polynomial dimensionality may not exhibit any Barren plateaus, which are flat regions in the parameter landscape \cite{Cerezo2021a,Fontana2023,McClean2018}. Moreover,  Lie algebraic symmetries have be used for classical simulation of quantum models \cite{Goh2023}. In this work, we build upon the Lie algebraic characterizations and we develop an efficient gradient estimation for general circuits based on the Hadamard test followed by post-processing steps. With that, we enable efficient applicability of the Hadamard test to generic \acp{PQC} without the need to change the ansatz structure and with low classical overhead. 

\subsection{Summary of the Main Results}
We analytically derive an explicit expression for the gradient of generic \acp{PQC} in terms of the expectation values of the Hadamard tests corresponding to a certain set of Pauli strings. Then, we develop a gradient estimation method using a series of Hadamard tests at the output of the ansatz followed by classical post-processing techniques including \ac{CST} \cite{Huang2020}.  A generic \ac{PQC} on $n$ qubits can be represented as $U(\parameter)=e^{iA(\parameter)},$ where $A$ is the parameterized Hamiltonian with $\overrightarrow{a}=(a_1, \cdots, a_p)$ as the vector of parameters. Typically, the Hamiltonian is written in term of Pauli strings as $A(\parameter) = \sum_{i=1}^p a_i P_i$, where $a_i\in \RR$ and $P_i$'s are tensor products of  Pauli operators. This formulation includes multi-layered \acp{PQC} (e.g., the hardware-efficient ansatz) and appears in a wide range of setups including \ac{QAOA}, many-body quantum systems (e.g., Ising model),  and  adiabatic evolutions. Typically, the objective is  minimizing a loss function $\Loss(\parameter)$ depending on the parameterization of the PQC, the input state, and the measurement observable. Hence, a gradient-based optimization can be  used given an estimation of $\nabla\Loss$.

Our gradient estimation method is based on a binary encoding of Pauli strings to capture the structural
properties of their commutation relations. Similar techniques have been used before in the context of the stabilizer formulation in quantum error correction \cite{Calderbank1996,Gottesman1997}. Then, we make a connection between this binary encoding and the differential of the matrix exponential map, studied in Lie algebra \cite{Rossmann2006}. We write the partial derivatives of the PQC   as an infinite-length linear combination of expectation values of  Hadamard tests for  various Pauli strings.  We show that when the Pauli strings  in  $A(\parameter)$ are closed under the commutation, the  infinite-length linear combination collapses to a finite number of terms that can be computed efficiently using the binary encoding. Such terms are written as the expectation value of a set of observables that can be estimated using CST, as an estimation procedure to estimate several observables with  sample complexity  growing logarithmically with the number of observables \cite{Huang2020}.   

The closedness condition means that for any pair of Pauli strings $P_i$ and $P_j$ appearing in $A(\parameter)$, their commutator $[P_i, P_j]$ can be written as a linear combination of the same set of Pauli strings. This condition implies that the Hamiltonian $A(\parameter)$ generates a \ac{DLA} with a basis consisting solely of the Pauli strings present in $A(\parameter)$ \cite{Schirmer2002,Wiersema2023}.  When the closedness condition is not directly satisfied, one can consider the sub algebra generated by $P_i$ terms in $A(\parameter)$ and apply the proposed gradient estimation method. In that case, the sample complexity and running time depend on the dimensionality of this sub-algebra. Therefore, the proposed estimation is efficient when the dimensionality is polynomial with $n$.   %

We also extend our results to a more general Hamiltonian structure where $A(\parameter)$ is decomposed into generic Hermitian terms  that correspond to certain physical interactions. We show that when the Hamiltonian \ac{DLA} has  $\poly(n)$ dimensionality, then the gradient $\nabla\Loss$ can be estimated polynomially.

The polynomial size assumption is already satisfied for several well-known Hamiltonian models, including variants of $2$-local Ising model (e.g., the transverse-field) and Kitaev chain \cite{Wiersema2023} used to model molecular dynamics. %
In addition, the polynomial dimensionality is an essential component to avoid the barren plateaus \cite{Larocca2022,Fontana2023,McClean2018,Cerezo2021a}. %
This stems from the fact that the variance of the gradient is inversely proportional to the dimension of the DLA \cite{Fontana2023}.%

Nevertheless, there still is a curiosity to understand the gradient estimation when DLA dimensionality grows exponentially with $n$. For that, we present a class of efficient methods to approximate the gradient and analyze  the error rate as a function of the approximation terms. Particularly, we present a truncation technique and a randomized algorithm for gradient estimation.

A more specific summary of our contributions is give below:

\begin{itemize}
	\item Showing the gradient of the loss function can be written as  $\nabla\Loss = \overrightarrow{D} f(V),$
	      where $f(z)=\frac{1-e^z}{z}$, 
		  $\overrightarrow{D}$ is the vector of  the expectation values of Hadamard tests for a set of Pauli strings, and $V$ is a matrix constructed  based on the parameters $\parameter$ (see Theorem \ref{thm:gradient mtx V}).
	\item  An algorithm that estimates $\nabla \Loss(\parameter)$ using $\tilde{O}(p)$ Hadamard tests and $O(p^3+pn)$ classical time.
	\item When the shadow norm of the observable is bounded, the gradient can be estimated with $O(\log p)$ copies and  $\poly(n)$ time (see Section \ref{sec:shadow}).
	\item A master's theorem for the general case where the Pauli terms are not closed under commutation (see Theorem \ref{thm:loss derivative non product}). %
	\item Generalization to Hamiltonians with generic Hermitian terms when the DLA has polynomial dimensionality (see Theorem \ref{thm:DLA derivative}).
	\item Approximation methods for gradient estimation when the DLA dimensionality is exponential (see Section \ref{sec:approximations}).
\end{itemize}

\subsection{Comparison With Related Methods}
We consider three complexity measures: (1) sample complexity, (2) the classical post-processing time, and (3) the number of distinct circuits that need to be evaluated to obtain all the partial derivatives.  %
Table \ref{tab:compare} demonstrates a simplified comparison with existing works for gradient estimation.  For a more intuitive comparison, it is assumed that the PQC acts on $n$ qubits with gate complexity  $\Theta(n^b)$ and  $p=\Theta(n^a)$ parameters, where $a$ and $b$ are arbitrary constants. The table shows that our approach provides an exponential advantage in terms of the sample complexity and a polynomial speed-up in classical running time. Below, we highlight some of the most relevant approaches for comparison to our work.

\begin{table}[t]
	\centering
	\begin{tabular}{l|ccc}
		& Circuit Changes         & Sample Complexity        & Running Time*                \\
		\midrule
		SPSR\cite{Banchi2021,Wierichs2022}  & $p$                    & $O(p)$                 & $O(n^{a+b})$                 \\
		NPSR\cite{Theis2023}                & $\tilde{O}(p\norm{A})$ & $\tilde{O}(p\norm{A})$ & $O(n^{a+b})$                 \\
		$\mathit{SU}(N)$\cite{Wiersema2024} & $p$                    & $O(p)$                 & $\exp(\Theta(n))$            \\
		Backpropagation \cite{Abbas2023}					& -						 & $O(\log^2 p)$          & $p\exp\qty\big(\tilde{O}(n))$    \\
		Theorem \ref{thm:subgroup}, \ref{thm:shadow cliford}    & $ 0$                & $O(p), O(\log p)^\star$        & $\tilde{O}(n^b +n^{3a})$ \\
		\bottomrule
	\end{tabular}
	\caption{Rough comparison of various methods for estimating the gradient of a PQC with $p$ parameters and DLA based on Pauli strings with $\poly(n)$ dimensionality. * For presentation convenience of the runtime, it is assumed that $p=\Theta(n^a)$ and that each use of the ansatz takes $\Theta(n^b)$ quantum time, where $a$ and $b$ are arbitrary constants. %
		$^\star$ Conditioned on bounded shadow norm of $O$ and corresponding polynomial time complexity.}
	\label{tab:compare}
\end{table}

\paragraph{Stochastic \ac{PSR}:}
This method is a generalization of \ac{PSR} \cite{Banchi2021,Wierichs2022}, where each partial derivative is written as an integral, and a Monte Carlo strategy is used to estimate it.   David \etal  \cite{Wierichs2022} also presented a generalization of the \ac{PSR} using the Discrete Fourier series. Here, the parameters are jointly shifted depending on the spectrum of the ansatz.

This method is efficient when the Hamiltonian $A$ is promised to have equidistant eigenvalues. For a generic PQC one first  needs to compute the spectral decomposition of $A$ to find the pattern of the parameter shifts. Therefore, this process in general takes $\exp{\Theta(n)}$ classical time as $A$ is an exponentially large matrix.

\paragraph{Nyquist \ac{PSR}:} Recently  \cite{Theis2023} proposed a  shift rule for \acp{PQC} where only the parameters are shifted without any other modifications of the ansatz. The method relies on a beautiful connection between the Nyquist-Shannon sampling theorem and the Fourier series that was observed earlier in \cite{Wierichs2022,Vidal2018}. The number of unique circuits for this estimation scales with $p$ and the difference between the maximum and minimum eigenvalues of $A$ --- a quantity bounded by the operator norm $\norm{A}$.   As the authors reported, this method has low approximation error when the parameter value $\theta$ is large enough. More precisely, the approximation error is $O(\frac{1}{c^2})$ as long as $\theta = (1 - \Omega(1))c$, where $c$ is the maximum magnitude of a parameter value.

\paragraph{Lie algebraic:}
This is another approach \cite{Wiersema2024} based on Lie algebra and a nice connection to the geometry of $SU(2^n)$ matrices and the adjoint operator. The gradient is calculated by finding the Jacobian matrix of the matrix representation of PQC. Hence, the running time scales as $p 2^{\Theta(n)}$. 

\paragraph{Shadow tomography:} A recent work \cite{Abbas2023} proposes a quantum backpropagation method for PQC of the form $U(\parameter) = \prod_{j=1}^p e^{ia_j P_j} U_j$, where $U_j$ are fixed unitaries and $P_j$ are fixed Pauli words. Leveraging the shadow tomography  of \citep{Aaronson2018}, the method achieves a sample complexity scaling as $\log^2 p$, but with the cost of an exponential classical memory requirement of $p2^{\tilde{O}(n)}$.

\paragraph{Classical simulation:}
Existing simulation methods such as g-sim \cite{Goh2023} can efficiently simulate
 a quantum system and hence compute the gradient, under the assumption that both the Hamiltonian $A(\parameter)$ and the observable $O$ have polynomial Lie dimensionality. In contrast, we only require $A(\parameter)$ (rather than both) to have polynomial Lie dimensionality. %

Assuming  both input state  $\rho$ and observable $O$ are efficiently classically simulatable, our work and that of \cite{Goh2023} have time complexity scaling polynomially with the dimension of the Lie algebra. Our work is complementary to \cite{Goh2023} and offers distinct advantages when   $\rho$ and/or $O$ are  not classically simulatable. This arises, for example, when $O$ does not lie in a Lie algebra of polynomial dimension, and  $\rho$ is a physical quantum state obtained via an external process (e.g., quantum sensing) or generated by a unitary that does not have polynomial Lie dimensionality. 

When $\rho$ is stored in a physical quantum state, \cite{Goh2023} relies on computing expectation values of the Lie algebra basis elements ($B_{\alpha}^{(\lambda)}$ in the reference) under $\rho$. This is only efficient when certain classes of $\rho$ including product states or stabilizer states. In  more general setting, such expectation values must be estimated in a quantum computer and the sample complexity of the estimation may not be polynomial in $n$. %
Furthermore, even when $\rho$ is classically simulatable, we expect significant (possibly exponential) separation from \cite{Goh2023} when the observable does not have polynomial Lie dimensionality while it can be implemented in polynomial quantum time  on a quantum device and has bounded shadow norm.
For example, in fidelity estimation, $O=\ketbra{\phi}{\phi}$, where $\ket{\phi}=e^{iH}\ket{0}$ and $H$ is a Hamiltonian not admitting a polynomial Lie decomposition. Here, $O$ is a low-rank observable with bounded shadow norm, but exponential Lie dimensionality. Hence, classical simulation runs exponentially in time.

\section{Preliminaries and Model}
\paragraph{Notation.}  For any $d\in \NN$, let $H_d$ be a the Hilbert space of $d$-qubits. %
By $\mathcal{B}(H_d)$ denote the space of all bounded (linear) operators acting on $H_d$.  The identity operator is denoted by $\id$. %
As usual, a quantum state, is denoted by a \textit{density operator}; that is a Hermitian, unit-trace, and non-negative linear operator.
A quantum measurement/observable is modeled as a \ac{POVM}. A POVM $\mathcal{M}$ is represented by a set of operators $\mathcal{M}:=\{M_v, v\in\mathcal{V}\}$, where $\mathcal{V}$ is the (finite) set of possible outcomes. The operators $M_v$ are non-negative and form a resolution of identity, that is $M_v=M_v^\dagger\geq 0,  \sum_v M_v =\id.$ %
The operator norm (infinity norm) for an operator $A$ is defined as $\norm{A}= \sup_{\ket{\phi}}\norm{A\ket{\phi}}.$
The Hilbert-Schmidt norm of $A$ is defined as $\norm{A}_2:=\sqrt{\tr{A^\dagger A}}$  
\subsection{Pauli Group}
Together with the identity, they  are denoted as $\{\sigma^0, \sigma^1, \sigma^2, \sigma^3\}$ with
\begin{align*}
	\sigma^0 = I= \begin{pmatrix}
		              1 & 0 \\ 0 & 1
	              \end{pmatrix}
	\qquad
	\sigma^1 =X = \begin{pmatrix}
		              0 & 1 \\ 1 & 0
	              \end{pmatrix},
	\quad  \sigma^2 = Y=  \begin{pmatrix}
		                      0 & -i \\ i & 0
	                      \end{pmatrix},
	\quad  \sigma^3 =Z =  \begin{pmatrix}
		                      1 & 0 \\ 0 & -1
	                      \end{pmatrix}.
\end{align*}
The single-qubit Pauli group $\CP_1$ is the 16 element set $\{c\sigma^s: s=0,1,2,3, ~ c=\pm 1, \pm i \}$.  %

For $n$ qubit systems, the Pauli tensor products are denoted as
$\sigma^{\bfs}:= \sigma^{s_1}\tensor \sigma^{s_2}\tensor \cdots \tensor \sigma^{s_d},$ for all  $\bfs\in \{0,1,2,3\}^n.$
The $n$-qubit Pauli group $\mathcal{P}_n$ is then defined as the group generated by $n$-fold tensor products of the Pauli matrices:
\[
	\mathcal{P}_n = \left\{ c\qps: \bfs\in \{0,1,2,3\}^n, ~  c=\pm 1, \pm i\right\}.
\]
This group has $4^{n+1}$ elements and  spans any operator on the space of $n$ qubits:
\begin{fact}
	Any bounded  operator $A$ on   $n$ qubits can be uniquely written as   $A = \sum_{\bfs \in \{0,1,2,3\}^n} a_{\bfs} ~\sigma^{\bfs},$ where $a_\bfs = \frac{1}{2^n}\tr\big\{A\qps\big\}.$
\end{fact}

\subsection{General Framework}
The objective is to minimize a cost function defined as
\begin{equation}\label{eq:VQA cost function}
	\Loss(\parameter) : = \tr{{O} ~ U(\overrightarrow{a}) \rho U(\overrightarrow{a})^\dagger},
\end{equation}
where  $O$ is a fixed observable, $\rho$ is the initial (mixed) state, and $U(\parameter)$ is a parameterized quantum circuit with $\overrightarrow{a}=(a_1, \cdots, a_p)$ as the vector of parameters. As $U$ is unitary, we can always write $U(\parameter) = e^{iA(\parameter)}$ for some Hamiltonian matrix $A$. %
To ensure computational tractability, it is assumed that the number of parameters $p=\poly(n),$ with $n$ being the number of qubits. %

Making iterative progress in the direction of the steepest descent is one of the most popular optimization techniques in \acp{VQA}, as it has been in classical problems. Ideally, a gradient descent optimizer applies the following update rule at each iteration $t$:
\begin{align}\label{eq:QSGD ideal}
	\overrightarrow{a}^{(t+1)} = \overrightarrow{a}^{(t)} - \eta_t \nabla \Loss(\overrightarrow{a}^{(t)}),
\end{align}
where $\eta_t \in \RR$ is the learning rate at iteration $t$.
The above update rule is not realistic as the objective function $\Loss(\overrightarrow{a})$ is an expectation value, and the characteristics of $\rho$ are either unknown or computationally intractable. %

There have been several approaches to estimating the gradient \cite{Farhi2018,Mitarai2018,Sweke2020,HeidariAAAI2022,Harrow2021,Schuld2018,Mitarai2019,Wiersema2024,Sohail2024,HeidariAAAI2022}. The zeroth-order approach (e.g., finite differences) evaluates the objective function in the neighborhood of the parameters.  Although it is a generic approach, recent studies showed their drawbacks in terms of convergence rate \cite{Harrow2021}.  First-order methods (e.g., parameter shift rule)  directly calculate the partial derivatives \cite{Schuld2018}. 

\subsection{Hadamard Test}
Given a unitary $U$, the Hadamard test, which is a special case of the phase estimation, is a quantum circuit that we can use to estimate the real or imaginary value of $\expval{U}{\psi}$ for some state $\ket{\psi}$. The circuit consists of two Hadamard gates and a controlled version of $U$.

When the ansatz has a simple form $U(\theta) = e^{i\theta \qps}$ its derivative can be written as below and be directly estimated via a Hadamard test \cite{Mitarai2018}:
\begin{equation}\label{eq:D_i}
	D_\bfs := i\tr{O[\qps, \rho^{out}]},
\end{equation}
where $\rho^{out}=U(\parameter)\rho U(\parameter)^\dagger$ is the output of the ansatz on input $\rho$. The above equation is based on the fact that $\dv{U}{\theta} = i\qps U(\theta)$. Estimating the gradient through the Hadamard test can enhance   computing efficiency, and allows for the use of measurement optimization techniques. Moreover, it can be used to compute higher-order partial derivatives used in higher-order optimization algorithms \cite{Li2024}. 

This approach can be extended to product ans\"atze that are concatenation of single parametric or non-parametric unitary circuits of the form
\begin{align*}
	U(\parameter) = U_L(a_L)V_L \cdots U_1(a_1)V_1,
\end{align*}
where $V_l$ is non-parametric and each parametric layer is of the form
$	U_l(a_l):= e^{i a_{\bfs_l} \sigma^{\bfs_l}}$
with $\sigma^{\bfs_l}$ being a Pauli string. %
More formally, the partial derivative of a product ans\"atze can be written in terms of commutators with associated Pauli strings. This statement is summarized below:
\begin{fact}\label{fact:loss derivative}
	Let ${\rho}_l^{out} =U_{\leq l}\ketbra{\phi} U_{\leq l}^\dagger$ denote the density operator of the output state at layer $l$. Then, the derivative of the loss is given by:
	\begin{align}\label{eq:losss derivative}
		\frac{\partial  \Loss%
		}{\partial a_{\bfs_l}} =   \tr{O U_{>l} \big[ \sigma^{\bfs_l}, {\rho}_l^{out}\big]U^\dagger_{>l}},
	\end{align}
	where $[\cdot, \cdot]$ is the commutator operation, and $U_{>l}$ denotes the unitary components from layer $l+1$ to $L$.
\end{fact}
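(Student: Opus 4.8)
<br>

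The plan is to prove Fact~\ref{fact:loss derivative} by a direct computation using the product structure of the ansatz together with the standard derivative formula for a single-parameter exponential $U_l(a_{\bfs_l}) = e^{i a_{\bfs_l}\sigma^{\bfs_l}}$, for which $\partial U_l/\partial a_{\bfs_l} = i\sigma^{\bfs_l} U_l = i U_l \sigma^{\bfs_l}$ (the two forms coincide because $\sigma^{\bfs_l}$ commutes with its own exponential). First I would write the loss as $\Loss = \tr{O\, U(\parameter)\ketbra{\phi}U(\parameter)^\dagger}$ and split the ansatz at layer $l$ as $U(\parameter) = U_{>l}\, U_l(a_{\bfs_l})\, V_l\, U_{\leq l-1}$, so that $U_{\leq l} = U_l(a_{\bfs_l}) V_l U_{\leq l-1}$ (absorbing $V_l$ as needed; the only parameter dependence relevant to $\partial/\partial a_{\bfs_l}$ sits in $U_l$). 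The quantity $\rho_l^{out} = U_{\leq l}\ketbra{\phi}U_{\leq l}^\dagger$ is the output state right after layer $l$, which is exactly the object appearing in the statement.

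Next I would apply the product rule to $\partial/\partial a_{\bfs_l}$ of $U(\parameter)\ketbra{\phi}U(\parameter)^\dagger$. Only the factor $U_l(a_{\bfs_l})$ depends on $a_{\bfs_l}$, appearing once in $U$ and once (as its conjugate) in $U^\dagger$. Differentiating the ``ket side'' brings down a factor $i\sigma^{\bfs_l}$ immediately after $U_l$ (reading from the input), i.e. it produces $U_{>l}\, i\sigma^{\bfs_l}\, \rho_l^{out}\, U_{>l}^\dagger$; differentiating the ``bra side'' produces the conjugate term $U_{>l}\, \rho_l^{out}\,(-i\sigma^{\bfs_l})\, U_{>l}^\dagger$. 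Here I use that $\partial U_l/\partial a_{\bfs_l} = i U_l\sigma^{\bfs_l}$ and that $\sigma^{\bfs_l}$ is Hermitian, so the bra-side derivative contributes $-i\sigma^{\bfs_l}$ sandwiched appropriately. Summing the two contributions gives
\begin{align*}
	\pdv{}{a_{\bfs_l}}\big(U(\parameter)\ketbra{\phi}U(\parameter)^\dagger\big) = U_{>l}\, i\big(\sigma^{\bfs_l}\rho_l^{out} - \rho_l^{out}\sigma^{\bfs_l}\big) U_{>l}^\dagger = i\, U_{>l}\big[\sigma^{\bfs_l},\rho_l^{out}\big] U_{>l}^\dagger.
\end{align*}

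Finally I would substitute this into $\partial\Loss/\partial a_{\bfs_l} = \tr{O\,\partial_{a_{\bfs_l}}(\cdots)}$ and pull the scalar $i$ through the trace to obtain $\partial\Loss/\partial a_{\bfs_l} = \tr{O\, U_{>l}[\sigma^{\bfs_l},\rho_l^{out}]U_{>l}^\dagger}$ — matching \eqref{eq:losss derivative} up to the placement of the factor $i$, which should be checked against the paper's sign/normalization convention (if the stated equation has no $i$, then either the convention is $U_l = e^{i a_{\bfs_l}\sigma^{\bfs_l}}$ with $[\cdot,\cdot]$ defined to include it, or there is an implicit $i$ absorbed into $D_\bfs$ as in \eqref{eq:D_i}; I would reconcile this carefully). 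There is essentially no hard step here: the proof is a one-line application of the product rule, and the only subtlety — the genuine ``main obstacle'' in a proof this short — is bookkeeping the position of the differentiated factor inside the long product and making sure the commutator comes out with the conjugation by $U_{>l}$ on the outside and the state $\rho_l^{out}$ (not $\ketbra{\phi}$) on the inside, which is precisely why the statement is phrased in terms of the layer-$l$ output state rather than the raw input.
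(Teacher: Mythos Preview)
Your argument is correct and is precisely the standard derivation; the paper does not give a proof of this statement at all (it is presented as a ``Fact'' from the literature), so there is nothing to compare against. Your observation about the missing factor of $i$ is also well taken: the paper's own definition $D_\bfs := i\tr{O[\sigma^{\bfs},\rho^{out}]}$ in \eqref{eq:D_i} includes the $i$, so the intended identity is $\partial\Loss/\partial a_{\bfs_l} = i\,\tr{O\,U_{>l}[\sigma^{\bfs_l},\rho_l^{out}]U_{>l}^\dagger}$, and the absence of $i$ in \eqref{eq:losss derivative} appears to be a typographical slip in the paper.
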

The expression above can be implemented using a quantum circuit with measurements at the end. The circuit is shown in Fig.~\ref{fig:loss derivative circuit multi} which is a special example of the generalized Hadamard test \cite{Mitarai2019,Harrow2021}.  In this procedure, given a sample $\ket{\phi}$ and an ancilla qubit $\ket{0}$, we first apply the first $l$ layers of the ansatz and then apply a special circuit with controlled rotations for taking the derivative of the loss. Then, the rest of the layers of the ansatz are applied and measurement is performed. Extending the Hadamard test for gradient estimation of more generic ans\"atze is one of the main focuses of this work.

\begin{figure}
	\centering
	\resizebox{0.8\textwidth}{!}{
		\begin{quantikz}[scale=0.2]
			\lstick{$\ket{\phi}$}&\gate{U_{\leq l}} &\gate{R_{\bfs_l}(+\frac{\pi}{2})}&\qw  &\gate{R_{\bfs_l}(-\frac{\pi}{2})} &\qw & \gate{U_{>l}} & \meter{} \arrow[r]&\rstick{$\hat{y}$}\\
			\lstick{$\ket{0}$} &\gate{H} &\ctrl{-1}  &\gate{X}& \ctrl{-1}  & \gate{X}& \qw & \meter{} \arrow[r]&\rstick{z}
		\end{quantikz}
	}
	\caption{Hadamard test with backpropagation for measuring the partial derivative of product ans\"atze with respect to a parameter $a_{\bfs_l}$ appearing at layer $l$. Here $U_{\leq l}$ corresponds to the first $l$ layers of the ansatz, and $U_{>l}$ to the remaining layers. in addition,  $X$ is the X-gate, $H$ is the Hadamard gate, and $R_{\bfs_l}$ is the controlled rotation around Pauli $\sigma^{\bfs_l}$. %
	}
	\label{fig:loss derivative circuit multi}
\end{figure}

\subsection{Differential of The Matrix Exponential}
The matrix exponential is defined as
\[
	\exp(X) = e^X = \sum_{k=0}^{\infty} \frac{X^k}{k!},
\]
where $X$ is a square matrix.  Due to non-commutativity of matrix product, the differential of the exponential map has a more complex formula compared to the exponential function. Suppose $X(\tau)$ is a differentiable matrix (linear operator) as a function of the variable $\tau\in \RR$. The \text{adjoint} map is defined as the mapping $$\ad_X(Y)= [X,Y]=XY-YX$$ for square $n\times n$ matrixes $X,Y\in \GL(n, \CC)$.  Then, for any $k=0,1, ...$, we can define $$\ad_X^k(Y) = [X, \cdots, [X,Y]\cdots ].$$

The exponential map and the adjoint are  fundamental concepts in the theory of Lie groups and Lie algebras, describing how a Lie group or Lie algebra acts on its own Lie algebra by conjugation (a standard text book on this topic is \cite{Rossmann2006}).  The adjoint operator is connected to the derivative of the matrix exponential.

\begin{theorem}[\cite{Rossmann2006}]\label{thm:exp differential}
	Suppose $X(\tau)$ is a differentiable  (linear) operator with respect to a variable $\tau \in \RR$. Then, the differential of the  matrix exponential is given by
	\begin{align}\label{eq:exp differential}
		\dv{\exp{X(\tau)}}{\tau} = \exp{X(\tau)} \frac{1-\exp{-\ad_X}}{\ad_X}\dv{X(\tau)}{\tau}.
	\end{align}
\end{theorem}

\section{Binary Encoding of Pauli Operators}\label{sec:binary encoding}
Representation of Pauli words with binary strings has been studied in  quantum error correction \cite{Calderbank1996,Gottesman1997} to construct the Stabilizer codes.  It is well-known that the Pauli group  $\mathcal{P}_n$ is \textit{isomorphic} to the \textit{semi direct product} of $\ZZ_4$ and $\ZZ_2^{2n}$.   This binary representation allows us to write the partial derivatives of $\Loss(\parameter)$ as linear combination of terms related to the  Hadamard tests applied to the ansatz output.  In this work, we present an explicit form of such a mapping. Below, we start with a series of notations and definitions and present a binary encoding technique to study the products of various Pauli string terms.

The phase scalar $c\in \{\pm 1, \pm i\}$ in $\CP_n$ can be written as $i^{a},$ where  $a\in \ZZ_4$, the modulo-four group. As for the Pauli operators, consider the binary vector group $\ZZ_2 \times \ZZ_2 = \set{(0|0), (0|1), (1|0), (1|1)}$ with the element-wise modulo two addition:
\begin{align*}
	(a^0|a^1)+(b^0|b^1) = (a^0\oplus b^0 | a^1\oplus b^1),
\end{align*}
where $\oplus$ is the binary addition. We use  $(\cdot |\cdot)$ to distinguish between the first and the second components of elements of $\ZZ_2\times \ZZ_2$.  We associate the identity and each Pauli operator with elements of $\ZZ_2 \times \ZZ_2$ as
\begin{align*}
	\sigma^0 \rightarrow (0|0), \qquad \sigma^1 \rightarrow (0|1), \qquad \sigma^2 \rightarrow(1|0), \qquad \sigma^3 \rightarrow(1|1).
\end{align*}

Extending to $n$-qubits,  $(\ZZ_2 \times \ZZ_2)^n$ is defined as the set of all $(\bfa^0|\bfa^1)$ for binary strings $\bfa^0, \bfa^1\in \ZZ_2^n$ with the element-wise  addition:
\begin{align*}
	(\bfa^0|\bfa^1)\oplus (\bfb^0|\bfb^1) = (\bfa^0\oplus \bfb^0 | \bfa^1\oplus \bfb^1).
\end{align*}
Similarly, any Pauli string $\qps, \bfs \in \{0,1,2,3\}^n$ is associated with $(\bfs^0|\bfs^1)$ where $\bfs^0= (s^0_1, \cdots, s^0_d), \bfs^1= (s^1_1, \cdots, s^1_d)$ are binary strings. Therefore, we frequently switch between representations of s as a member of $\ZZ_4^n$ and $(\ZZ_2\times \ZZ_2)^n$.

\begin{example}
	The Pauli string $X\tensor Y\tensor X$ which is associated with $\qps$ with $\bfs = (1,2,1)$ has the following binary encoding: $\qty((0,1,0) | (1, 0, 1)).$
\end{example}
 The above association induces an isomorphism between the Pauli group $\CP_n$ and the semi direct product of $\ZZ_4$ and $\ZZ_2^{2n}$:
\begin{remark}
	$\mathcal{P}_n$ is isomorphic to the \textit{central product} of $\ZZ_4$ and $\ZZ_2^{2n}$.  Ignoring the phase scalar $c$ in $\CP_n$, the group quotient  is \textit{isomorphic} to the Binary vector group  $\ZZ_2^{2n}$. On the other hand, the scalars $\{\pm 1, \pm i\}$ are \textit{isomorphic} to the cyclic group of order 4 which is $\ZZ_4=\{0,1,2,3\}$ with modulo-4 addition as the group action.  
\end{remark}

We formalize the above mapping in the following definition.
\begin{definition}\label{def:Pauli phi map}
	Any element of $\CP_n$, written as $i^a\qps$, is represented as $(a, \bfs)$ where  $a\in \ZZ_4$ and $\bfs \in \ZZ_2^{2n}$. Such a representation is defined by the mapping  $\phi: \CP_n\rightarrow \ZZ_4\times \ZZ_2^{2n}$, that sends $\phi(i^a\qps)=(a,\bfs)$.
\end{definition}

\paragraph{Encoding the products of Pauli words.}
Inspired by the Levi-Civita symbol, we define a sign function on $i,j\in \ZZ_4$ (or $\ZZ_2\times \ZZ_2$) as
\begin{align*}
	\ordersign{i,j}:=\begin{cases} -1 & \text{if}~ (i,j) = (1,3), (2,1), (3,2) \\
              1  & \text{if}~ (i,j) = (1,2), (2,3), (3,1) \\
              0  & \text{otherwise}.                       \\\end{cases}\numberthis\label{eq:ordersign}
\end{align*}
For vectors $\bfu,\bfv$, define $\ordersign{\bfu,\bfv}=\sum_j \ordersign{u_j,v_j}$.

\begin{lemma}\label{lem:product of Paulies}
	The product of any pair of Pauli strings $\qps, \qpr$ equals
	$$\qps\qpr = i^{\ordersign{\bfs, \bfr}} \sigma^{\bfs \oplus \bfr}.$$
\end{lemma}
\begin{proof}
	It is not difficult to verify the lemma for single qubit case by testing it for all possible combinations. Building on the single qubit case, for general $n>1$, with the tensor product, we have that
	\begin{align*}
		\qps \qpr  &= \bigotimes_j \sigma^{s_j}\sigma^{r_j}\\
		 &=  \bigotimes_j i^{\ordersign{s_j, r_j}} \sigma^{s_j \oplus r_j}\\
		 & = i^{\sum_j \ordersign{s_j,r_j}}  \bigotimes_j  \sigma^{s_j \oplus r_j}\\
		 & = i^{\sum_j \ordersign{s_j,r_j}} \sigma^{\bfs \oplus \bfr}.
	\end{align*}
	where the second equality is due to the single qubit case and the third equality is due to the fact that the phase scalars are multiplicative. The last equality is due to the definition of $\oplus$ for binary strings.
\end{proof}

\begin{lemma}
    The mapping $\phi: \CP_n\rightarrow \ZZ_4 \rtimes \ZZ_2^{2n}$  defined as $c\qps\mapsto \phi(c\qpr)=(c,\bfr)$ is an isomorphism of $\CP_n$ onto the semi direct  product of $\ZZ_4$ and $\ZZ_2^{2n}$. 
\end{lemma}
\begin{proof}
    We define the semi direct product of $\ZZ_4$ and $\ZZ_2^{2n}$ by defining the action that maps $\ZZ_4$ to the group of automorphisms of $\ZZ_2^{2n}$. For $a\in \ZZ_4$, we define the automorphism $\alpha_a: \ZZ_2^{2n}\rightarrow \ZZ_2^{2n}$ as $\alpha_a(\bfs) = a\bfs$ where $a\bfs$ is defined as the element-wise product of $a$ and $\bfs$. Then, the semi direct product of $\ZZ_4$ and $\ZZ_2^{2n}$ is defined as the mapping that sends any pair of elements $(a, \bfs), (b, \bfr)$ to
    $(mod_4(a+b+\sum_j \ordersign{s_j, r_j}), \bfs\oplus\bfr)$.
    With this definition, Lemma \ref{lem:product of Paulies} shows that $\phi(\qps\qpr)=\phi(\qps)\phi(\qpr)$. Adding the phase scalars is addressed by this definition. 
\end{proof}
With this result, and by adding the phase scalars, for any pair $i^a\qps$ and $i^b\qpr$ from the Pauli group $\CP_n$, the product is characterized by the $\phi$ map in Definition \ref{def:Pauli phi map} as
$$(i^a\qps) (i^b\qpr) = \phi^{-1}\qty(\qty(\text{mod}_4(a+b+\ordersign{\bfs, \bfr}), \bfs\oplus\bfr)).$$

\begin{example}
	Consider $P_1 = i X\tensor Y\tensor X$ and $P_2 = Z\tensor X\tensor Y$ that are encoded to $\qty(1, ( (010) | (101)))$ and $\qty(0, ((101) | (110)))$, respectively.  Then, $P_1P_2$ is associated  with the binary encoding $\qty(0, ((111) | (011)))$ which represents  $Y\tensor Z\tensor Z$.
\end{example}

\begin{example}
    For $n=1$ as an example, the  Cayley Table of product of Pauli matrices is given by the following table:
    \[
    \begin{array}{c|ccc}
    \bullet & (b, 01) & (b, 10) & (b, 11) \\
    \hline
    (a, 01) & ((a + b)_4, 00) & ((a + b+1)_4, 11) & (((a + b-1)_4), 10) \\
    (a, 10) & ((a + b-1)_4, 11)  & ((a + b)_4, 00)  & ((a + b+1)_4, 01)  \\
    (a, 11) & ((a + b+1)_4, 10)  & ((a + b-1)_4, 01)  & ((a + b)_4, 00)  \\
    \end{array}
    \]
\end{example}

The binary representation also characterizes the commutation relations between Pauli strings.
\begin{lemma}\label{lem:bracket and K_4}
	The commutator of any pair of Pauli strings $\qps, \sigma^{\bfr}$ is given by
	$[\qps, \sigma^{\bfr} ] = 2 i^{\ordersign{\bfs, \bfr}} \sigma^{\bfs \oplus \bfr}.$

\end{lemma}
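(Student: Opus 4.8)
The plan is to derive the commutator formula directly from Lemma \ref{lem:product of Paulies}, which already tells us how to multiply two Pauli strings. The key algebraic fact I would exploit is the symmetry structure of the two quantities $\bfs \circledast \bfr$ and $\bfs \circ \bfr$ under swapping $\bfs$ and $\bfr$: the ``target'' string $\bfs \circ \bfr = \bfr \circ \bfs$ is symmetric (since $\circ$ is defined element-wise and the Klein four-group is abelian), whereas $\bfs \circledast \bfr$ and $\bfr \circledast \bfs$ are related in a controlled way — they have the same absolute value $|\bfs \odot \bfr| \in \{0,1\}$ but differ in sign according to the product of the $\varepsilon_{s_j, r_j}$ factors.

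First I would write, using Lemma \ref{lem:product of Paulies},
\begin{align*}
[\qps, \sigma^{\bfr}] = \qps \sigma^{\bfr} - \sigma^{\bfr}\qps = i^{\bfs \circledast \bfr}\sigma^{\bfs \circ \bfr} - i^{\bfr \circledast \bfs}\sigma^{\bfr \circ \bfs} = \qty(i^{\bfs \circledast \bfr} - i^{\bfr \circledast \bfs})\sigma^{\bfs \circ \bfr},
\end{align*}
where the last step uses $\bfs \circ \bfr = \bfr \circ \bfs$. Then the whole problem reduces to evaluating the scalar $i^{\bfs \circledast \bfr} - i^{\bfr \circledast \bfs}$ and showing it equals $2i(\bfs \circledast \bfr)$. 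I would split into cases on $\bfs \odot \bfr$. If $\bfs \odot \bfr = 0$, then both $\bfs \circledast \bfr$ and $\bfr \circledast \bfs$ are $0$, so the scalar is $i^0 - i^0 = 0 = 2i \cdot 0$, consistent with the claim (and this is exactly the commuting case). If $\bfs \odot \bfr = 1$, then $\bfs \circledast \bfr = \pm 1$ and I claim $\bfr \circledast \bfs = -(\bfs \circledast \bfr)$; granting this, the scalar becomes $i^{\bfs\circledast\bfr} - i^{-(\bfs\circledast\bfr)} = i^{\bfs\circledast\bfr} - \overline{i^{\bfs\circledast\bfr}} = 2i\,\mathrm{Im}(i^{\bfs\circledast\bfr})$. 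Since $\bfs \circledast \bfr \in \{-1,1\}$, $i^{\bfs\circledast\bfr}$ is $i$ or $-i$, so the imaginary part is exactly $\bfs\circledast\bfr$, giving $2i(\bfs\circledast\bfr)$ as desired.

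The one step that needs a small verification is the antisymmetry claim: when $\bfs \odot \bfr = 1$ (equivalently an odd number of positions $j$ have $\sigma^{s_j}, \sigma^{r_j}$ both nontrivial and distinct), we need $\prod_j \varepsilon_{s_j,r_j} = -\prod_j \varepsilon_{r_j,s_j}$. At each position, $\varepsilon_{s_j,r_j}\varepsilon_{r_j,s_j} = -1$ precisely when $(s_j,r_j)$ is an ordered pair of distinct nonzero elements (one of $\{(1,2),(2,1),(1,3),(3,1),(2,3),(3,2)\}$), since exactly one of the two orderings lies in $\{(2,1),(3,1),(3,2)\}$; otherwise $\varepsilon_{s_j,r_j}\varepsilon_{r_j,s_j} = 1$. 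A position contributes to $\bfs \odot \bfr$ (i.e. contributes $1$ to the symplectic form mod $2$) exactly in those same cases — a single-qubit check against the binary encoding confirms $s_j \odot r_j = 1$ iff $\sigma^{s_j}$ and $\sigma^{r_j}$ are distinct and both nontrivial. Hence $\prod_j \varepsilon_{s_j,r_j}\varepsilon_{r_j,s_j} = (-1)^{\#\{j: s_j\odot r_j = 1\}} = (-1)^{\bfs\odot\bfr} = -1$, proving the antisymmetry. I expect this bookkeeping between the sign factors and the symplectic form to be the only mildly delicate part; everything else is a two-line deduction from Lemma \ref{lem:product of Paulies}.
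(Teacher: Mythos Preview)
Your proof is correct and follows essentially the same route as the paper's: both start from Lemma~\ref{lem:product of Paulies}, use the commutativity $\bfs\circ\bfr=\bfr\circ\bfs$, and then case-split on whether $\bfs\odot\bfr$ is $0$ or $1$ to evaluate $i^{\bfs\circledast\bfr}-i^{\bfr\circledast\bfs}$. The only difference is that you spell out the justification for the antisymmetry $\bfr\circledast\bfs=-(\bfs\circledast\bfr)$ in the anticommuting case, whereas the paper simply asserts it.
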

\begin{proof}
	The proof is based on Lemma \ref{lem:product of Paulies}:
	\begin{align*}
		[\qps, \sigma^{\bfr} ] &= \qps \sigma^{\bfr} - \sigma^{\bfr} \qps \\
							   & = i^{\sum_j \ordersign{s_j,r_j}} \sigma^{\bfs \oplus \bfr} - i^{\sum_j \ordersign{r_j,s_j}} \sigma^{\bfr \oplus \bfs}.
	\end{align*}
	Note that $\bfr \oplus \bfs = \bfs \oplus \bfr$ and that $\ordersign{r_j,s_j}=-\ordersign{s_j,j_j}$. Therefore, as $i^{-1}=-i$, the commutator equals to   $[\qps, \sigma^{\bfr} ] = 2 i^{\sum_j \ordersign{s_j,r_j}} \sigma^{\bfs \oplus \bfr}$.
\end{proof}

\section{Gradient Formulation for PQC with Pauli Decomposition}
Recall that the objective function $\Loss(\parameter)$ in \eqref{eq:VQA cost function} and the expectation values $D_\bfs$ in \eqref{eq:D_i}  can be estimated using Hadamard test. We show how the gradient of $\Loss$ can be evaluated in terms of $D_\bfs$ quantities. 
We present a master theorem writing the partial derivative is a linear combination of several terms similar to the product ans\"atze as in Fact \ref{fact:loss derivative}. %
Then we show that under certain algebraic  structures, the linear combination collapses into $\poly(p)$ terms. %

\begin{theorem}\label{thm:loss derivative non product}
	Consider an ansatz of the form $U(\overrightarrow{a})=\exp{i A(\parameter)}$, where $A(\parameter)=\sum_{\bfs\in \mathcal{S}} \qas \qps$ for some $\mathcal{S}\subseteq \set{0,1,2,3}^n$. %
	Then, for any $\bfr\in \mathcal{S}$, 
	\begin{align}\label{eq:pdv infinite sum}
		\frac{\partial \Loss(\parameter)}{\partial \qar} = \sum_{k=0}^\infty  \frac{(2i)^k}{(k+1)!} \sum_{\bfs_1 \in \mathcal{S}} \cdots \sum_{\bfs_k \in \mathcal{S}}  \qty\Big(\prod_{j=1}^k  a_{\bfs_j}  i^{\ordersign{\bfs_j, \bfs_1 \oplus \cdots \oplus \bfs_{j-1} \oplus \bfr}})  D_{\bfs_1 \oplus \cdots \oplus \bfs_k \oplus \bfr},
	\end{align}
	where $D_{\bft}$ is defined as in \eqref{eq:D_i},  and $\ordersign{\cdot, \cdot}$ is defined as in \eqref{eq:ordersign}. 
\end{theorem}
The proof of the theorem is delayed until Section \ref{sec:proof thm derivative}. With this theorem, the derivative of loss can be measured by applying the standard Hadamard test followed by classical corrections. As a sanity check, Appendix \ref{sec:example single qubit} presents an explicit derivation of the gradient in the single qubit case.

Next, we study the simplification of the  partial derivative for Hamiltonians that have an algebraic structure with the goal to find an efficient estimator of  the gradient.

We show that when $\mathcal{S}$ in the theorem is isomorphism to a subgroup of $(\ZZ_2)^{2n}$, the expression simplifies significantly. In that case,the set of Pauli strings $\qps, \bfs\in \CS$ is closed under the commutation up to a global phase.  %
\begin{figure}[ht]
	\centering
	\begin{tikzpicture}[
	scale=0.75,
  transform shape,
  font =\Large,
  x=1cm,y=1cm,
  >=Stealth,
  line cap=round,
  line join=round,
  every node/.style={inner sep=1pt}
]
  \coordinate (L) at (0,0);
  \draw[thick,->] (L) -- ++(-1.2,-1.6) node[below left] {$a_1$};
  \draw[thick,->] (L) -- ++(3.4,-0.6)  node[below right] {$a_2$};
  \draw[thick,->] (L) -- ++(0,3.5)      node[above] {$a_3$};

  \draw[red,thick,->] (L) -- ++(0.9,1.6)
    node[above right,xshift=-2em] {$\nabla \Loss=(\pdv{\Loss}{a_1}, \pdv{\Loss}{a_2}, \pdv{\Loss}{a_3})$};

  \coordinate (R) at (11.8,-0.15);
  \draw[thick,->] (R) -- ++(-1.2,-1.7) node[below left] {$\bfe_1$};
  \draw[thick,->] (R) -- ++(3.2,-0.55) node[below right] {$\bfe_2$};
  \draw[thick,->] (R) -- ++(0,3.45)    node[above] {$\bfe_3$};

  \draw[red,thick,->] (R) -- ++(1.5,1.55)
    node[above right] {$\vec{D}=(D_{\bfs_1}, D_{\bfs_2}, D_{\bfs_3})$};

  \draw[very thick,<-]
    (5,1.3) .. controls (6.5,2.35) and (7.9,2.35) .. (10,1.3);

  \node at (7.5,2.55) {\Large $\nabla\Loss = \overrightarrow{D}f(V)$};
\end{tikzpicture}
	\caption{
		The left picture represents the parameter space of the ansatz with three parameters $\parameter = (a_1, a_2, a_3)$ corresponding to the Pauli strings $\sigma^{\bfs_1}, \sigma^{\bfs_2}, \sigma^{\bfs_3}$ for $\bfs_1, \bfs_2, \bfs_3 \in \CS$. Here, the gradient is a vector in this space. When the corresponding Pauli strings are closed under commutation, the gradient can be expressed as a vector in the space of Hadamard test outcomes $D_\bfs$ as shown in the right picture. The transformation from the right to the left picture is given by the matrix $V$ defined in \eqref{eq:V matrix column}. 
		}
	\label{fig:matrix illustration}
\end{figure}

For that, we first consider a geometric representation of the gradient in terms of $D_\bfs$ terms. Let $\CS = \{\bfs_1, \cdots, \bfs_p\}$, where $p$ is the number of parameters.  Let $\bfe_1 = (1, 0,\cdots, 0)^T, \cdots, \bfe_p =(0, \cdots, 0, 1)^T$ be the canonical basis vectors in $\RR^p$. We associate each $\bfs_j$ with $\bfe_j$ which is also denoted by $\bfe_{(\bfs_j)}$. Now consider the vector of  the expectation terms $\overrightarrow{D} = (D_{\bfs_1}, \cdots, D_{\bfs_p})$ as a row  vector in $\RR^p$:
\[
	\overrightarrow{D} = (D_{\bfs_1}, \cdots, D_{\bfs_p}) = \sum_{j=1}^p D_{\bfs_j} \bfe_{(\bfs_j)}.
\]

The following theorem shows that the gradient can be expressed as a vector derived from $\overrightarrow{D}$ by applying a matrix transformation. The matrix is defined in terms of the parameters and the group structure of the Pauli strings.

\begin{theorem}\label{thm:gradient mtx V}\label{thm:subgroup}
	Consider the ansatz $U(\overrightarrow{a})=\exp{i A(\parameter)}$, with $A(\parameter)=\sum_{\bfs\in {\CS}} \qas \qps$, where $\CS$ is closed under commutation and $|\CS|=p$. Define a $p\times p$ matrix $V$ where the $j$th column is given by
	\begin{align}\label{eq:V matrix column}
		\bfv_j := 2i \sum_{\bfs \in \mathcal{S}}   a_{\bfs}  i^{\ordersign{\bfs, \bfs_j}}~   \bfe_{(\bfs \oplus \bfs_j)}.
	\end{align}
	Then, the gradient of the loss is given by 
	\begin{align}\label{eq:gradient mtx V}
		\nabla\Loss = \overrightarrow{D}f(V),
	\end{align}
	where $f(z)=\frac{1-e^z}{z}$ with $f(0)=1$.
	Moreover there is an algorithm (Algorithm \ref{alg:subgroup}) that computes $\nabla\Loss$ in $O(p^3 + pn)$ time with $O(p)$ use  of the Hadamard tests.
\end{theorem}

\begin{proof}
First, as the Paulies  $\qps, \bfs \in \CS$ are  closed under commutation, $\CS$ is closed under the ``$\oplus$" operation. Hence, each term $\bfs_1\oplus \cdots \oplus \bfs_k \oplus \bfr$  in \eqref{eq:pdv infinite sum} remains in $\CS$. As a result the infinite-length sum  in Theorem \ref{thm:loss derivative non product} reduces to a linear combination of their form
\begin{align*}
	\frac{\partial \Loss}{\partial a_{\bfr}} = \sum_{\bfs\in \CS}  g_\bfs(\bfr)  D_{\bfs},
\end{align*}
where $g_\bfs(\bfr)\in \RR$. This means that the partial derivative is always a linear combination of $D_\bfs, \bfs\in \CS$ terms. However, the challenge is in computing the coefficients $g_\bfs(\bfr)$ that are coming from an infinite-length sum. We present a method to address this issue. 

Recall the association of $\bfs$ with the canonical basis vectors in $\RR^p$ and the vector $\overrightarrow{D}$ in $\RR^p$. Then $\pdv{\Loss}{\qar}$ is a vector in $\RR^p$ with the representation
\begin{align*}
	\pdv{\Loss}{\qar} \equiv (g_{\bfs_1}(\bfr), \cdots, g_{\bfs_p}(\bfr))\in \RR^p.
\end{align*}
Recall the matrix $V$. By an induction argument, it is not difficult to check that 
	\begin{align*}
		V^k\bfe_{(\bfs_i)}  = (2i)^k \sum_{\bfs_1 \in \mathcal{S}} \cdots \sum_{\bfs_k \in \mathcal{S}}  \prod_{j=1}^k \qty(a_{\bfs_j}  i^{\ordersign{\bfs_j, \bfs_1 \oplus \cdots \oplus \bfs_{j-1} \oplus \bfr}})   \bfe_{\bfs_1 \oplus \cdots \oplus \bfs_k \oplus \bfr},
	\end{align*}
and that 
	\begin{align*}
		f(V)\bfe_{(\bfr)} = \sum_{k=0}^\infty  \frac{(2i)^k}{(k+1)!} \sum_{\bfs_1 \in \mathcal{S}} \cdots \sum_{\bfs_k \in \mathcal{S}}  \qty\Big(\prod_{j=1}^k  a_{\bfs_j}  i^{\ordersign{\bfs_j, \bfs_1 \oplus \cdots \oplus \bfs_{j-1} \oplus \bfr}})  e_{\bfs_1 \oplus \cdots \oplus \bfs_k \oplus \bfr}.
	\end{align*}
Therefore,  the gradient of the loss is given by $\nabla\Loss = \overrightarrow{D}f(V)$.

This representation is the key to the efficient computation of  the gradient from the vector representation of the $D_\bft$.  One first needs to estimate all $D_{\bfs_i}, i\in [p]$  which  can be done with $O(p)$ Hadamard test. Creating the vector $\overrightarrow{D}_\bfs = (D_{\bfs_1}, \cdots, D_{\bfs_p})$  and applying the matrix $f(V)$ give the gradient as summarized in Algorithm \ref{alg:subgroup}. The runtime of the algorithm is $O(p^3+pn)$ because  each column $\bfv_i$ is computed in $O(p n)$ time. Also $f(V)$ can be computed by standard algorithms for matrix functions, such as Schur-based methods or scaling-and-squaring variants, in $O(p^3)$ arithmetic operations.
\end{proof}

\begin{example}
	Consider the ansatz $U(\parameter) = e^{i(a_1 \sigma^1 + a_2 \sigma^2+ a_3 \sigma^3)}$. The set of Pauli strings in the Hamiltonian is $\CS = \{1, 2, 3\}$ which is closed under commutation. Hence, the partial derivatives of the loss are linear combinations of $D_1$, $D_2$, and $D_3$. We can associate normalized $D_1$ with the first canonical vector $e_1 = (1, 0, 0)$ and normalized $D_2$ with $e_2 = (0, 1, 0)$ and normalized $D_3$ with $e_3 = (0, 0, 1)$. Then, the partial derivative is a vector in this space. The gradient is expressed as a vector derived from $\overrightarrow{D} = D_1\bfe_1 + D_2\bfe_2 + D_3\bfe_3$. Figure \ref{fig:matrix illustration} illustrates this example.
\end{example}

\begin{algorithm}[ht]
	\caption{Subgroup Gradient Estimation}\label{alg:subgroup}
	\textbf{Input:} $\CS$
	\begin{algorithmic}[1]
		\Procedure{Subgroup Gradient Estimation}{}
		\STATE Estimate $D_\bfs, \bfs\in \CS$ with Hadamard tests.
		\STATE Compute the matrix $V$ where the column $i$ is computed as in \eqref{eq:V matrix column}.
		\STATE Compute the matrix $f(V),$ where $f(z)=(1-e^z)/z$ (extended continuously at $z=0$).
		\EndProcedure
		\State \textbf{Return} $\widehat{\nabla\Loss} = \overrightarrow{\hat{D}}f(V)$.
	\end{algorithmic}
\end{algorithm}

A simple example of a Hamiltonian closed under commutation is  $A = a_1X^{\tensor n} + a_2 Y^{\tensor n}+ a_3 Z^{\tensor n}.$ The gradient in this case can be computed by a $3\times 3$ matrix and 3 Hadamard tests, irrespective of the dimensionality of the quantum system. Another class is $K$-junta Hamiltonians, that acting non-trivially on $k$ qubits. In that case, all the  Pauli strings $\qps, \bfs\in \CS,$ will be of the form $\sigma_k^\bfs\tensor I_{n-k}$, the set $\CS$ will be closed under commutation.%

\begin{remark}
   When the closedness condition is not directly satisfied, one can apply Theorem \ref{thm:subgroup} to the DLA generated by the terms in $A(\parameter)$. When the dimensionality of the DLA is $m$, Algorithm \ref{alg:subgroup}  computes $\nabla \Loss$  in $O(m^3+dm)$ time with $O(m)$ use of the Hadamard tests.  Hence, Algorithm \ref{alg:subgroup} is efficient when the DLA has dimensionality polynomial in $n$.
\end{remark}

\begin{example}\label{ex:Ising}
	The following Hamiltonian has a DLA with dimensionality $O(n^2)$:
	\begin{align*}
		H = \sum_{i=1}^n Z_iZ_{i+1} + X_i,
	\end{align*}
	where $X_i, Z_i$ are the corresponding Pauli operators. In this case,  applying Theorem \ref{thm:subgroup} to the DLA generated by $H$, Algorithm \ref{alg:subgroup}  computes $\nabla \Loss$  in $O(n^6)$ time with $O(n^2)$ use of the Hadamard tests.
\end{example}

\section{Efficient Estimation with Shadow Tomography}\label{sec:shadow}

Turning the gradient estimation to a series of Hadamard tests has another benefit that can further reduce the number of shots to $O(\log p)$. This can be done using shadow tomography \cite{Aaronson2018,Huang2020,Huang2021,Chen2024,King2025}. The components of the gradient $\nabla\Loss$ correspond to $p$ observables that only depend on $\rho^{out}$ without any reconfigurations. In that case, having several copies of $\rho^{out}$, we can efficiently estimate all the components of the gradient. Based on Theorem \ref{thm:subgroup}, the observables are $p$ different Hadamard tests denoted by $H_{\bfs_j}, j\in [p]$ for measuring $D_{\bfs_j}$.  The following lemma gives an explicit characterization of these observables.%

\begin{lemma}\label{lema:Ds in Hs}
	$D_\bfs = \tr{H_\bfs \rho^{out}}$, where $\rho^{out}$ is the ansatz output state and 
\begin{align}\label{eq:Hs}
	H_{\bfs} = R_{\bfs,-}^\dagger O R_{\bfs,-} - R_{\bfs,+}^\dagger O R_{\bfs,+},
\end{align}
where $O$ is the ansatz observable and $R_{\bfs,\pm} = e^{-i\frac{\pi}{4}}$. 
\end{lemma}
\begin{proof}
    Recall that
	\begin{align*}
		D_\bfs = i\tr{O[\qps, \rho^{out}]}.
	\end{align*}
	Moreover, note from \cite{Mitarai2018} the following property of the commutator for any operator $B$:
	\begin{align*}
		[\qps, B] = i\qty(R_\bfs(\frac{\pi}{2}) B R_\bfs(\frac{\pi}{2})^\dagger - R_\bfs(-\frac{\pi}{2}) B R_\bfs(-\frac{\pi}{2})^\dagger).
	\end{align*}
	With this observation, we obtain
	\begin{align*}
		D_\bfs & = -\tr{O \qty(R_\bfs(\frac{\pi}{2}) \rho^{out} R_\bfs(\frac{\pi}{2})^\dagger - R_\bfs(-\frac{\pi}{2}) \rho^{out} R_\bfs(-\frac{\pi}{2})^\dagger)}   \\
		       & =-\tr{R_\bfs(\frac{\pi}{2})^\dagger O R_\bfs(\frac{\pi}{2}) \rho^{out}} + \tr{R_\bfs(-\frac{\pi}{2})^\dagger O R_\bfs(-\frac{\pi}{2}) \rho^{out} },
	\end{align*}
	where we used the cyclic property of the trace. The last equation gives the expression for $H_{\bfs}$. 
\end{proof}
In order to estimate the gradient, one needs to estimate all the expectation values ${o}^\pm_\bfs:=\tr{R_{\bfs,\pm}^\dagger O R_{\bfs,\pm} \rho^{out}}$ for $\bfs\in \CS$. %
We use CST to exponentially reduce the measurement shots for the gradient estimation when  $O$ has a bounded Hilbert-Schmidt norm. Typically, for non-local bounded norm observables, one can use CST with random Clifford measurements. This approach has  sample complexity $O(\log p)$ but generally when the observables are not classically simulatable it has an exponential computational complexity of $2^{\Theta(n)}$. This is because it performs operations (such as product and trace) on $2^n$ by $2^n$ matrices. Particularly, from Theorem 1 of \cite{Huang2020} we have the following result for estimating the gradient.

\begin{theorem}
	Given an observable $O$, state $\rho^{out}$, and a set $\CS\subseteq \{0,1,2,3\}^n$, the expectation values ${o}^\pm_\bfs:=\tr{R_{\bfs,\pm}^\dagger O R_{\bfs,\pm} \rho^{out}}$ for $\bfs\in \CS$  can be estimated with $\epsilon$ additive error using  $${N  = O\qty( \tfrac{1}{\epsilon^2} \log |\CS|  \max_{\bfs\in \CS} \norm{R_{\bfs,\pm}^\dagger O R_{\bfs,\pm}}_{shadow}^2)}$$ copies of $\rho^{out}$, where $\norm{\cdot}_{shadow}$ is the {\emph shadow norm}.
\end{theorem}
The shadow norm (Definition \ref{def:shadow norm} in Appendix \ref{sec:CST}) is closely related to the variance of the observable and the set of the unitary transformation used for taking the classical shadows. For random Clifford measurements, it is bounded by  the Hilbert-Schmidt  norm; whereas for random Pauli measurements, it is bounded by $4^k$, where $k$ is the  locality of the observable, not the actual number of qubits \cite{Huang2020}. %

\begin{theorem}\label{thm:shadow cliford}
	Consider the ansatz $U(\overrightarrow{a})=\exp{i A(\parameter)}$, with $A(\parameter)=\sum_{\bfs\in {\CS}} \qas \qps$, where $\CS$ is closed under commutation. Suppose $\tr{O^2}\leq B$ for a finite constant $B$. Then,  $\nabla \Loss(\parameter)$ can be estimated with an additive error less than $\epsilon$  using $O(\tfrac{Be^{pc}}{pc\epsilon^2}  \log \frac{|\CS|}{\delta})$ copies of $\rho^{out}$. 
\end{theorem}
\begin{proof}
	We use CST with random Clifford measurements to estimate the observables $H_{\bfs}$ for all $\bfs\in \CA$. From Theorem 1 of \cite{Huang2020},  the estimation of each $D_{\bfs_j}$ with an additive error $\epsilon$ and probability at least $(1-\delta)$ requires $O\qty( \tfrac{1}{\epsilon^2} \log \frac{1}{\delta}  \max_{\bfs\in \CS} \norm{H_{\bfs}}_{shadow}^2)$ copies of $\rho^{out}$. Therefore, 
	it remains to analyze $\norm{H_{\bfs}}_{shadow}^2$. For random Clifford measurements, the shadow norm is bounded by the Hilbert-Schmidt norm \cite{Huang2020}, and hence $\norm{H_{\bfs}}_{shadow}^2 \leq \tr{H_{\bfs}^2}$. We proceed with the following lemma to bound  $\tr{H_{\bfs}^2}$ with the proof given  in Appendix \ref{subsec:proof:lem:tr Hs}.
	\begin{lemma}\label{lem:tr Hs}
		 $\tr{H_\bfs^2}\leq 4\tr{O^2}$ for any $\bfs\in \set{0,1,2,3}^n$. 
	\end{lemma}
	The lemma with the theorem's assumption imply that $\norm{H_{\bfs}}_{shadow}^2 \leq 4B$. Hence, the estimation of all $D_{\bfs}, \bfs\in \CS$ with an additive error $\epsilon$ and probability at least $(1-\delta)$ requires $O\qty( \tfrac{B}{\epsilon^2} \log \frac{|\CS|}{\delta})$ copies of $\rho^{out}$.

	Theorem \ref{thm:subgroup} connects the estimation of all $D_{\bfs}$'s to the estimation of $\nabla \Loss$ via the relation $\nabla\Loss = \overrightarrow{D}f(V)$. Hence, it remains to analyze the error propagation from the estimation of $D_{\bfs}$'s to the estimation of $\nabla \Loss$. We proceed with the following lemma to bound the operator norm of $f(V)$ with the proof given in Appendix \ref{subsec:proof:lem:Bnorm-subgroup}.

	\begin{lemma}\label{lem:Bnorm-subgroup}
Let $V\in \CC^{p\times p}$ be the matrix from Theorem \ref{thm:subgroup}, and  $f(z) := (1-e^{-z})/z$ for $z\neq 0$ and $f(0) := 1$. Then,
\[
\|f(V)\| \le \frac{e^{\|V\|}-1}{\|V\|}.
\]
Moreover, if $|a_{\bfs}|\le c$ for all $\bfs\in\CS$, then
\[
\|f(V)\| \le \frac{e^{2pc}-1}{2pc}.
\]
\end{lemma}
With the lemma, we can bound the error propagation from the estimation of $D_{\bfs}$'s to the estimation of $\nabla \Loss$. Note that $\nabla \Loss = \overrightarrow{D}f(V)$ and hence the estimation error of $\nabla \Loss$ is at most $\|f(V)\|$ times the estimation error of $\overrightarrow{D}$. Hence,  $\nabla \Loss$ can be estimated with an additive error $\epsilon$ and probability at least $(1-\delta)$ using $O\qty( \tfrac{Be^{pc}}{pc\epsilon^2} \log \frac{|\CS|}{\delta})$ copies of $\rho^{out}$.
\end{proof}

\section{Proof of Theorem \ref{thm:loss derivative non product}}\label{sec:proof thm derivative}
We start by taking the partial derivative of $\Loss(\parameter)$. Noting that $\rho^{out}:=U\rho U^\dagger$, the partial derivative of the loss with respect to $\qas$ is 
\begin{align*}
\frac{\partial \Loss}{\partial \qas} = \tr{ O \frac{\partial }{\partial \qas}\rho^{out}},
\end{align*}
where 
\begin{align}\label{eq:rhou out derivative}
\pdv{\rho^{out}}{\qas} = \frac{\partial U}{\partial \qas} \qty(\rho U^\dagger) + \qty(U\rho) \pdv{U^\dagger}{\qas},
\end{align}
and we used the fact that $\rho^{out}$ is Fr\' echet differentiable with respect to $\qas$. To characterize the partial derivative of $U$ we need to introduce some ingredients in Lie algebra and the connection to the derivative of matrix exponential.

We equip the space $\GL(2^n, \CC)$ of  $2^n\times 2^n$ complex matrices with the Lie algebra and the standard Lie bracket. %

Next, we present the following lemma that characterizes the partial derivative of $U$ in terms of the adjoint operator.
\begin{lemma}\label{lem:pdv adj reverse}
Suppose $U=\exp{iH(\tau)}$ for some Hermitian operator valued function $H$. Then, 
\begin{align*}
\dv{U}{\tau} =  -\frac{1-\exp{i\ad_H}}{\ad_H}(\dv{H(\tau)}{\tau}) U,
\end{align*}
where it is assumed that $H(\tau)$ is differentiable.
\end{lemma}
\begin{proof}
    We can write $U=(e^{-iH(\tau)})^\dagger$.  Therefore, given that $\dv{X^\dagger}{\tau}=(\dv{X}{\tau})^\dagger$, from \eqref{eq:exp differential} we can write
    \begin{align*}
    \dv{U}{\tau} &= \qty(\dv{e^{-iH(\tau)}}{\tau})^\dagger = \qty( \exp{-iH(\tau)} \frac{1-\exp{-\ad_{-iH}}}{\ad_{-iH}}\dv{(-iH(\tau))}{\tau})^\dagger\\
    &= \qty(  \frac{1-\exp{-\ad_{-iH}}}{\ad_{-iH}}\dv{(-iH(\tau))}{\tau})^\dagger U.
    \end{align*}
    
    Note that for any $H\in \g$ we have the following equality by its convergent power series:
     \begin{align}\label{eq:exp adjoint}
    \frac{1-\exp{-\ad_H}}{\ad_H} = \sum_{k=0}^\infty \frac{(-1)^k}{(k+1)!} (\ad_H)^k.
     \end{align}
    Therefore, replacing $H$ with $-iH$ in this equation, the derivative equals to the following 
    \begin{align*}
     \dv{U}{\tau} = \qty(-i\sum_{k=0}^\infty \frac{(i)^k}{(k+1)!} (\ad_H)^k (\dv{H}{\tau}))^\dagger  U =i\sum_{k=0}^\infty \frac{(-i)^k}{(k+1)!} \qty((\ad_H)^k (\dv{H}{\tau}))^\dagger  U.
    \end{align*}
    Note that for any $X,Y\in \g$, 
    $(\ad_X(Y))^\dagger = - \ad_{X^\dagger}(Y^\dagger).$
    Therefore,
    \begin{align*}
    \dv{U}{\tau} = i\sum_{k=0}^\infty \frac{(i)^k}{(k+1)!} (\ad_H)^k (\dv{H}{\tau})  U = -\frac{1-\exp{+i\ad_H}}{\ad_H}(\dv{H}{\tau}) U.
    \end{align*}
    \end{proof}
From this lemma, we have that 
\begin{align*}
\frac{\partial U}{\partial \qas} = -\frac{1-\exp{i\ad_{A}}}{\ad_{A}}(\pdv{A}{\qas}) U.
\end{align*}
This gives the first part of \eqref{eq:rhou out derivative}.
Next, from \eqref{eq:exp differential}, the partial derivative of $U^\dagger$ can be  written as
\begin{align*}
\frac{\partial U^\dagger}{\partial \qas} = U^\dagger  \frac{1-\exp{+i\ad_{A}}}{\ad_{A}}(\pdv{A}{\qas}),
\end{align*}
where we used the fact that $A$ is Hermitian. Therefore, the partial derivative of $\rho^{out}$ equals to 
\begin{align*}
\pdv{\rho^{out}}{\qas}  &= -\frac{1-\exp{i\ad_{A}}}{\ad_{A}}(\pdv{A}{\qas}) U \rho U^\dagger + U\rho U^\dagger \frac{1-\exp{i\ad_{A}}}{\ad_{A}}(\pdv{A}{\qas})\\
&=- \frac{1-\exp{i\ad_{A}}}{\ad_{A}}(\pdv{A}{\qas})\rho^{out}  + \rho^{out} \frac{1-\exp{i\ad_{A}}}{\ad_{A}}(\pdv{A}{\qas}).
\end{align*}
By simplifying the terms in the right-hand side, the partial derivative of the loss  can be written as the commutator:
\begin{align}\label{eq:pdv L adjoint}
\frac{\partial \Loss}{\partial \qas} = \tr{ O \frac{\partial }{\partial \qas}\rho^{out}} = \tr{ O \qty[\rho^{out}, \frac{1-\exp{i\ad_{A}}}{\ad_{A}}(\pdv{A}{\qas})]}.
\end{align}
Next, based on the Taylor expansion of the matrix exponential  and from the fact that $\pdv{A}{\qas}=\qps$, the above quantity decomposes as
	\begin{align}\label{eq:pdv adjoint taylor}
		\frac{\partial \Loss}{\partial \qas} = -i\sum_{k=0}^\infty \frac{(i)^k}{(k+1)!}   \tr{ O \qty[\rho^{out}, (\ad_A)^k(\qps)]}.
	\end{align}

Then, building on the binary encoding of Pauli operators, and  Lemma \ref{lem:bracket and K_4}, we can write the adjoint operator as below.
	\begin{lemma}\label{lem:ad decompos}
For any $A,B\in \g$, the corresponding adjoint decomposes as  
\begin{align*}
\ad_A(B) = 2\sum_{\bfr, \bfs} \qas b_\bfr i^{\ordersign{\bfs, \bfr}}  \sigma^{\bfs \oplus \bfr},
\end{align*}
where 
\begin{align*}
\qas:=\frac{1}{2^n}\tr{A\qps} \qquad b_\bfr:=\frac{1}{2^n}\tr{B\sigma^\bfr}
\end{align*}
 are the Pauli coefficients of $A$ and $B$, respectively. 
\end{lemma}
\begin{proof}
  The proof follows from  Lemma \ref{lem:bracket and K_4} and the linearity of adjoint:
  \begin{align*}
    \ad_A(B) = \sum_{\bfs, \bfr} \qas b_\bfr [\qps, \qpr] = \sum_{\bfs, \bfr} \qas b_\bfr 2 i^{\sum_j \ordersign{s_j,r_j}} \sigma^{\bfs \oplus \bfr}. 
  \end{align*}
\end{proof}

Hence, from Lemma \ref{lem:ad decompos} for $\ad_A(\qpr)$, we can write
\begin{align*}
  \ad_A^k(\qpr) = 2^k\sum_{\bfs_1 \in \mathcal{S}} \cdots \sum_{\bfs_k \in \mathcal{S}}  \qty(\prod_{j=1}^k  a_{\bfs_j}  i^{\ordersign{\bfs_j, \bfs_1 \oplus \cdots \oplus \bfs_{j-1} \oplus \bfr}})  \sigma^{\bfs_1 \oplus \cdots \oplus \bfs_k \oplus \bfr}.
\end{align*}
Therefore, the expression for the partial derivative of $\Loss$ equals
\begin{align*}
\frac{\partial \Loss(\parameter)}{\partial \qas} = -i\sum_{k=0}^\infty  \frac{(2i)^k}{(k+1)!} \sum_{\bfs_1 \in \mathcal{S}} \cdots \sum_{\bfs_k \in \mathcal{S}}  \prod_{j=1}^k \qty( a_{\bfs_j}  i^{\ordersign{\bfs_j, \bfs_1 \oplus \cdots \oplus \bfs_{j-1} \oplus \bfr}})  \tr{ O \qty[\rho^{out}, \sigma^{\bfs_1 \oplus \cdots \oplus \bfs_k \oplus \bfr}]}.
\end{align*}
Lastly, recalling that   $D_\bft:=i \tr{O \big[ \sigma^\bft, {\rho}^{out}\big]},$ the above equation simplifies to 
\begin{align*}
  \frac{\partial \Loss(\parameter)}{\partial \qas} = \sum_{k=0}^\infty  \frac{(2i)^k}{(k+1)!} \sum_{\bfs_1 \in \mathcal{S}} \cdots \sum_{\bfs_k \in \mathcal{S}}  \qty(\prod_{j=1}^k  a_{\bfs_j}  i^{\ordersign{\bfs_j, \bfs_1 \oplus \cdots \oplus \bfs_{j-1} \oplus \bfr}})  D_{\bfs_1 \oplus \cdots \oplus \bfs_k \oplus \bfr}.
  \end{align*}

\section{Extensions and Approximations}\label{sec:Extension}
In this section we present extensions and generalizations of the results from the previous sections.

\subsection{Approximations for general case}\label{sec:approximations}
When $\CS$ in $A(\parameter)$ is not a subgroup, then the terms appearing in the partial derivative in  Theorem \ref{thm:loss derivative non product} expand beyond $\CS$. Indeed, they form a subgroup denoted by $\<\CS\>$ called the subgroup generated by $\CS$ and is defined as the smallest subgroup containing $\CS$. In that case, one can compute the gradient from Theorem \ref{thm:subgroup} and Algorithm \ref{alg:subgroup} for $\<S\>$. This yields a run-time that scales polynomially with $|\<\CS\>|$. This is tractable as long as the size of $\<\CS\>$ is polynomial in $d$. Otherwise, one needs a different approach. In what follows, we introduce approximations to the gradient computation for general $\CS$.

\subsubsection{Truncation}
We propose to approximate the partial derivatives by truncating the infinite sum that appeared as \eqref{eq:pdv infinite sum} in Theorem \ref{thm:loss derivative non product}.
\begin{theorem}\label{thm:derivative approx}
	Consider an ansatz of the form $U(\overrightarrow{a})=\exp{i A(\parameter)}$, where $A(\parameter) = \sum_{\bfs \in \mathcal{S}} \qas \qps$, and $\mathcal{S}$ is the index of the parameters.   Then, the derivative of $\Loss(\parameter)$ for this ansatz is approximated in terms of  $D_\bft$ as:
	\begin{align*}
		\frac{\partial \Loss}{\partial a_{\bfr}} = \sum_{k=0}^K  \frac{(2i)^k}{(k+1)!} \sum_{\bfs_1 \in \mathcal{S}} \cdots \sum_{\bfs_k \in \mathcal{S}}  \qty\Big(\prod_{j=1}^k  a_{\bfs_j}  i^{\ordersign{\bfs_j, \bfs_1 \oplus \cdots \oplus \bfs_{j-1} \oplus \bfr}})  D_{\bfs_1 \oplus \cdots \oplus \bfs_k \oplus \bfr}, + \epsilon_K,
	\end{align*}
	where $\epsilon_K = \exp{-K\log \frac{K}{ e \pi p \norm{O}}}$.
\end{theorem}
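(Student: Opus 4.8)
The plan is to identify $\epsilon_K$ with the exact tail of the (absolutely convergent) series of Theorem \ref{thm:loss derivative non product}: truncating \eqref{eq:pdv infinite sum} after the $k=K$ term leaves
\begin{align*}
	\epsilon_K = \sum_{k=K+1}^\infty \frac{(-2)^k}{(k+1)!} \sum_{\bfs_1 \in \mathcal{S}} \cdots \sum_{\bfs_k \in \mathcal{S}}  \prod_{j=1}^k a_{\bfs_j} \qty( \bfs_j \circledast (\bfs_1 \circ \cdots \circ \bfs_{j-1}\circ \bfr))   D_{\bfs_1 \circ \cdots \circ \bfs_k \circ \bfr},
\end{align*}
and the goal is to bound $|\epsilon_K|$ by estimating each summand. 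First I would bound the three kinds of factors. The sign factors satisfy $\bfs_j \circledast (\cdots) \in \set{-1,0,1}$, so each is at most $1$ in absolute value. Each Hadamard‑test value obeys $|D_\bft| = |\tr{O[\sigma^\bft,\rho^{out}]}| \le 2\norm{O}$, since $\sigma^\bft$ is unitary and $\rho^{out}$ has unit trace. And assuming the parameters are bounded, say $|a_\bfs|\le 1$, the product $\prod_{j=1}^k |a_{\bfs_j}| \le 1$, so the inner $k$‑fold sum over $\mathcal{S}$ has at most $|\mathcal{S}|^k = p^k$ summands, each of magnitude $\le 2\norm{O}$. Combining these gives
\begin{align*}
	|\epsilon_K| \;\le\; 2\norm{O}\sum_{k=K+1}^\infty \frac{(2p)^k}{(k+1)!}.
\end{align*}
(The same estimate can be read off operator‑theoretically: the series is an entire function of $\ad_A$ applied to $\partial_{a_\bfr}A=\sigma^\bfr$ and then paired with $O$ and $\rho^{out}$, so $|\epsilon_K|$ is the operator‑norm tail of that function at argument of size $\lesssim \norm{A}\le p$ — the same bound, and the absolute convergence also retroactively justifies treating $\epsilon_K$ as an honest tail.)

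The remaining step is to estimate the factorial tail. For $K$ large enough that $2p/(K+2)\le \tfrac12$, successive terms of $\sum_{k>K}(2p)^k/(k+1)!$ decrease by a factor of at least two, so the sum is at most $2\,(2p)^{K+1}/(K+2)!$. Substituting Stirling's lower bound $(K+2)! \ge \sqrt{2\pi(K+2)}\,((K+2)/e)^{K+2}$ and simplifying — folding the polynomial prefactors $\norm{O}$, $p$, and $\sqrt{K}$ into the exponential decay — produces a bound of the form $(c\,p\norm{O}/K)^K$ for an absolute constant $c$; a routine (if fiddly) choice of $c$, with the $e$ coming from Stirling, gives exactly $\exp{-K\log\frac{K}{e\pi p\norm{O}}}$.

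I expect the main obstacle to be precisely this last bookkeeping: carrying the constants through Stirling so that the closed form lands on $e\pi$ in the denominator rather than merely some $O(p\norm{O})$, while keeping track of the range of $K$ for which the bound is both valid (the geometric‑domination step needs $K \gtrsim 4p$) and non‑vacuous (one needs $K > e\pi p\norm{O}$ just for $\epsilon_K<1$, so nothing is claimed for small $K$). A secondary point to make explicit in the write‑up is the dependence on the parameter magnitudes: since the stated $\epsilon_K$ carries no $\norm{\overrightarrow{a}}$ factor, the bound implicitly assumes $|a_\bfs|\le 1$; without that normalization one simply replaces $p$ by $p\,\norm{\overrightarrow{a}}_\infty$ (equivalently $\sum_{\bfs}|a_\bfs|$) throughout the argument above.
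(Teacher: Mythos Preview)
The paper does not actually spell out a proof of this theorem: it merely says ``We propose to approximate the partial derivatives by truncating the infinite sum that appeared as \eqref{eq:pdv infinite sum}'' and then states the result. Your proposal is exactly the natural tail-bounding argument the paper leaves implicit, and it is correct in outline: bound $|\bfs\circledast\bfr|\le 1$, $|D_\bft|\le 2\norm{O}$, count $p^k$ terms in the inner sum, and apply Stirling to the factorial tail.

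Your two caveats are well taken and worth stating as such. First, the missing $\norm{\parameter}$ dependence in the claimed $\epsilon_K$ does force an implicit normalization like $|a_\bfs|\le 1$; the paper never makes this explicit, and your fix (replace $p$ by $\sum_\bfs|a_\bfs|$) is the right one. Second, the precise form $(e\pi p\norm{O}/K)^K$ with $\norm{O}$ \emph{inside} the exponential is not what falls out of your bound $2\norm{O}\sum_{k>K}(2p)^k/(k+1)!$, which gives $\norm{O}$ as a prefactor times $(2ep/K)^{K}$ up to polynomial corrections. The stated $\epsilon_K$ is an upper bound on yours whenever $\norm{O}\ge 1$ (since then $\norm{O}^K$ swamps the single prefactor and $e\pi>2e$), but not obviously otherwise; this looks like loose bookkeeping in the paper rather than a gap in your argument. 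Treat the $e\pi$ as the authors' particular way of packaging constants rather than something to be hit exactly.
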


\begin{corollary}
	In the setting of the above theorem, with bounded $\norm{O}$, having $K=\Omega(p+\log \frac{1}{\epsilon})$ suffices to approximate $\pdv{\Loss}{\qas}$ upto an $\epsilon$ additive error.
\end{corollary}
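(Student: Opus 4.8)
The plan is to treat Theorem~\ref{thm:derivative approx} as a black box, since it is stated just above and supplies the closed-form truncation error $\epsilon_K = \exp\!\big(-K\log\frac{K}{e\pi p\norm{O}}\big)$. Proving the corollary then reduces to solving the inequality $\epsilon_K \le \epsilon$ for $K$. Rewriting $\epsilon_K = \big(e\pi p\norm{O}/K\big)^K$ and taking logarithms, the requirement is exactly $K\log\frac{K}{e\pi p\norm{O}} \ge \log\frac1\epsilon$.

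First I would make the logarithm a positive constant: choosing $K \ge 2e\pi p\norm{O}$ gives $\log\frac{K}{e\pi p\norm{O}} \ge \log 2$, so the left-hand side is at least $(\log 2)\,K$, and it then suffices to also impose $K \ge (\log 2)^{-1}\log\frac1\epsilon$. Hence any $K$ with $K = \Omega\!\big(p\norm{O} + \log\frac1\epsilon\big)$ — for instance $K = \max\{2e\pi p\norm{O},\, 2\log\frac1\epsilon\}$ — forces $\epsilon_K \le \epsilon$. Under the standing hypothesis that $\norm{O}$ is bounded we have $p\norm{O} = O(p)$, so this collapses to $K = \Omega\!\big(p + \log\frac1\epsilon\big)$, which is the claimed statement.

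Since that step is elementary, for completeness I would also recall why Theorem~\ref{thm:derivative approx} holds, as it is the only substantive ingredient. Starting from the exact expansion \eqref{eq:pdv infinite sum}, the error after truncating at degree $K$ is the tail $\sum_{k>K}\frac{(-2)^k}{(k+1)!}\sum_{\bfs_1\in\CS}\cdots\sum_{\bfs_k\in\CS}\prod_{j=1}^k a_{\bfs_j}\,\big(\bfs_j\circledast(\bfs_1\circ\cdots\circ\bfs_{j-1}\circ\bfr)\big)\,D_{\bfs_1\circ\cdots\circ\bfs_k\circ\bfr}$. Each signed product lies in $\{-1,0,1\}$; each Hadamard-test value obeys $\abs{D_\bft} = \abs{i\tr{O[\sigma^\bft,\rho^{out}]}} \le 2\norm{O}$, via $\abs{\tr(OM)}\le\norm{O}\norm{M}_1$ with $\norm{\cdot}_1$ the trace norm, together with unitarity of $\sigma^\bft$ and $\tr{\rho^{out}}=1$; and the $k$ nested index sums contribute the factor $\big(\sum_{\bfs\in\CS}\abs{a_{\bfs}}\big)^k$, which is $O(p)^k$ once the individual angles are bounded. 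So the degree-$k$ term is at most, up to constants, $\frac{(cp)^k}{(k+1)!}\norm{O}$, and the Stirling bound $(k+1)!\ge((k+1)/e)^{k+1}$ turns $\sum_{k>K}$ into a super-geometrically decaying series once $K$ is past the summand's peak at $k^\star \asymp p$, so the tail is dominated by its leading term, which is $\le (e\pi p\norm{O}/K)^K$.

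I expect the main obstacle to be precisely this last estimate: locating the peak $k^\star$ of $k\mapsto (cp)^k/(k+1)!$, showing rigorously that for $K>k^\star$ the tail is controlled within a constant factor by its first term, and folding all the $O(1)$ constants and the norm factor — from the bound on $\abs{D_\bft}$, from Stirling, and from the assumed bound on the parameters — into the single clean base $e\pi p\norm{O}/K$. Once Theorem~\ref{thm:derivative approx} is granted, the corollary itself is just the one-line inversion carried out in the second paragraph.
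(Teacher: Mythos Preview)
Your proposal is correct and matches the paper's approach: the paper states the corollary without proof, treating it as an immediate consequence of the error bound $\epsilon_K = \exp\!\big(-K\log\frac{K}{e\pi p\norm{O}}\big)$ in Theorem~\ref{thm:derivative approx}, and your one-line inversion (force $K\ge 2e\pi p\norm{O}$ so the logarithm is bounded below by a constant, then take $K\gtrsim\log\frac{1}{\epsilon}$) is exactly the intended argument. Your supplementary sketch of why Theorem~\ref{thm:derivative approx} itself holds is extra but reasonable, though note that the paper likewise does not prove that theorem in the text.
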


We present a numerical experiment to study the effect of the truncation on the approximation error of the gradient.  For that we generated a Hamiltonian
\begin{align*}
	H(\parameter) = \sum_{i=1}^d a_i Z_iZ_{i+1} + a^\prime_{i}X_i,
\end{align*}
where $Z_i$ and $X_i$ are the Pauli operators applied on the $i$'th qubit. Then we use our method with various truncation parameters $K$ to calculate the estimation error. Fig. \ref{fig:approx} demonstrates the averaged approximation error measured as the norm distance between the true gradient and our approximation. The graphs are for randomly chosen parameters $a_i, a^\prime_i$  as a function of $K$ for various qubit numbers $n$. The figure shows that with a few number of parameters, one can approximate the gradient with high precision. Shaded regions indicate one standard deviation over 10 trials with random observables $O$.

\begin{figure}[ht]
	\centering
	\includegraphics[width =0.5 \textwidth ]{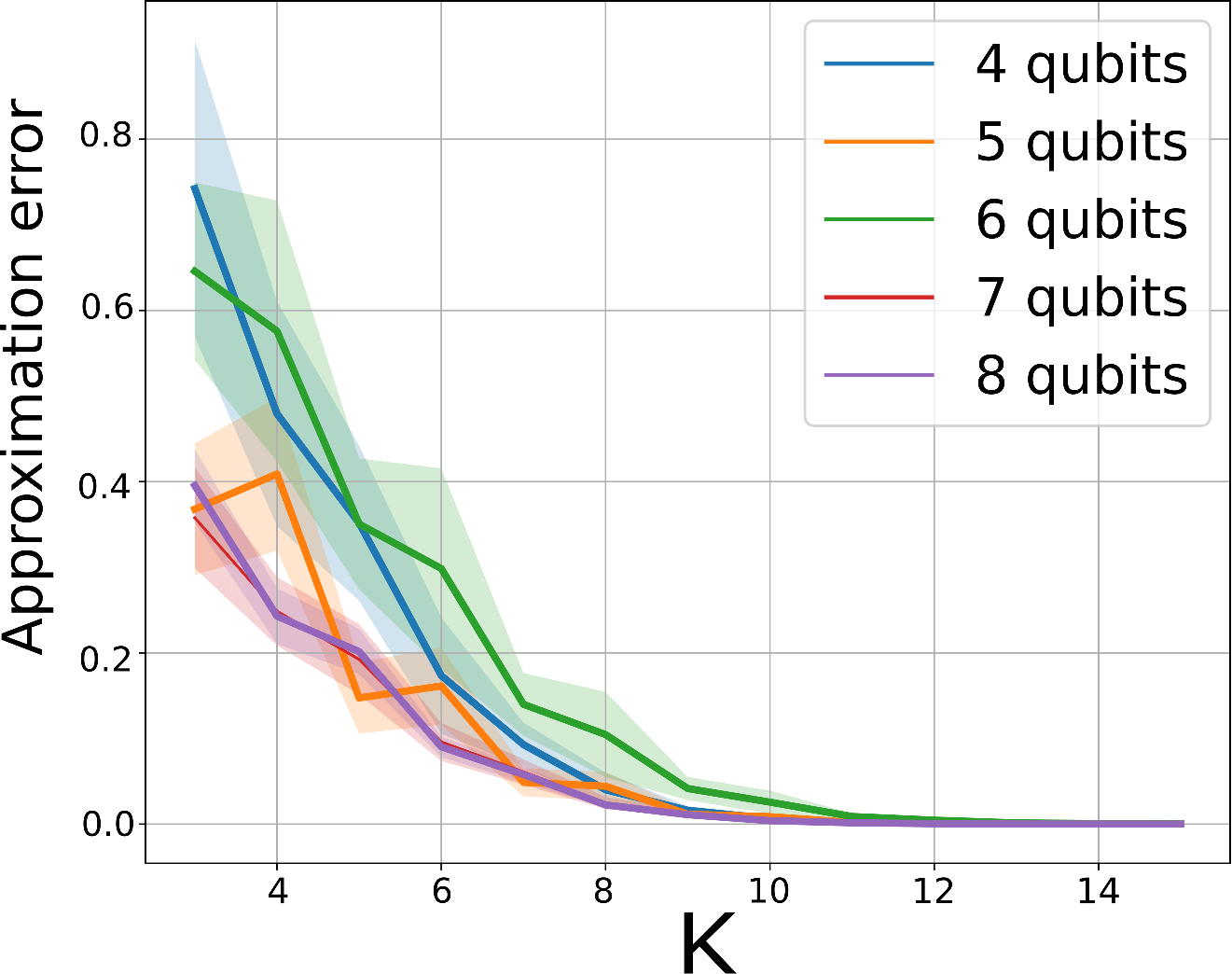}
	\caption{Estimation error for gradient approximation as a function of $K$ for various Hamiltonians with different numbers of qubits. }
	\label{fig:approx}
\end{figure}

\subsubsection{Unbiased Estimation via Randomization}
The infinite sum can be estimated via a randomization technique. We show that the partial derivative in Theorem \ref{thm:loss derivative non product} can be written as an expectation value of a function of the Poisson random variable. Let $K$ be the Poisson random variable  with rate $\lambda=2$:
\begin{align*}
	\prob{K=k}= \frac{2^k}{k!}e^{-2}, \qquad k=0,1,\cdots
\end{align*}
By sampling $K$ from this distribution we can estimate the partial derivative expression of Theorem \ref{thm:loss derivative non product}. Note that
\begin{align}\label{eq:pdv infinite sum randomized}
	\frac{\partial \Loss}{\partial a_{\bfr}} = \sum_{k=0}^\infty \frac{(2)^{k+1}}{(k+1)!} X(k)
\end{align}
where
\begin{align*}
	X(k):=\frac{(i)^k}{2} \sum_{\bfs_1 \in \mathcal{S}} \cdots \sum_{\bfs_k \in \mathcal{S}}  \prod_{j=1}^k \qty( a_{\bfs_j}  i^{\ordersign{\bfs_j, \bfs_1 \oplus \cdots \oplus \bfs_{j-1} \oplus \bfr}})    D_{\bfs_1 \circ \cdots \circ \bfs_k \circ \bfr}.
\end{align*}
In that case, $$\frac{\partial \Loss}{\partial a_{\bfr}} =e^2\EE[X(K)].$$ Hence, by sampling from this distribution we can compute an unbiased estimate of the partial derivative.

\subsection{Short-term Hamiltonian}
The above procedure can be made significantly more efficient if one guarantees that the parameter range is small, that is $|\qas|\leq \epsilon$ for small enough $\epsilon>0$. This is the case in short-term Hamiltonian simulation or a Hamiltonian with ``weak interactions". In the short-term simulation, the unitary is of the form $U=e^{i\Delta t H}$, where $\Delta t$ is the time step. In that case, each partial derivative is approximated in $O(1)$ classical computation with one Hadamard test of Figure \ref{fig:loss derivative circuit multi}.
\begin{corollary}
	If $\max_{\bfs\in \CS}|\qas|\leq \frac{\epsilon}{p (1+2\epsilon)}$, then the partial derivative can be approximated with $\epsilon$ additive error via $\pdv{\Loss}{\qas} \approx D_{\bfs}$.
\end{corollary}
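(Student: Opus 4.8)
The plan is to read the result straight off the exact series \eqref{eq:pdv infinite sum} of Theorem \ref{thm:loss derivative non product}, applied with derivative index $\bfs$: the $k=0$ term of that expansion is precisely $\tfrac{(-2)^0}{1!}D_{\bfs}=D_{\bfs}$ (the empty product equals $1$ and the outer index is $\bfs$ itself), so everything reduces to showing that the remaining tail, i.e.\ the sum over $k\ge 1$, has absolute value at most $\epsilon$ under the hypothesis $\max_{\bfs\in\CS}|\qas|\le \epsilon/(p(1+2\epsilon))$.

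First I would bound the $k$-th layer of the tail termwise. Writing $a:=\max_{\bfs\in\CS}|\qas|$ and $D_{\max}:=\max_{\bft}|D_{\bft}|$, I will use three facts: each signed-product factor $\bfs_j\circledast(\cdots)$ lies in $\set{-1,0,1}$ and so has magnitude at most $1$; the nested sum over $\bfs_1,\dots,\bfs_k\in\CS$ has at most $|\CS|^k=p^k$ terms; and $|D_{\bfs_1\circ\cdots\circ\bfs_k\circ\bfs}|\le D_{\max}$, which is $O(1)$ --- indeed $D_{\max}\le 2\norm{O}_\infty$ by the cyclic property of the trace together with $\norm{\rho^{out}}_1=1$. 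Since additionally $2^k/(k+1)!\le 1$ for every $k\ge 0$, the $k$-th layer contributes at most $(pa)^k D_{\max}$ in absolute value.

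Next I would sum the resulting geometric series. Under the hypothesis $pa\le \epsilon/(1+2\epsilon)<\tfrac12$, so $\sum_{k\ge 1}(pa)^k = pa/(1-pa)$, and since $x\mapsto x/(1-x)$ is increasing on $[0,1)$, substituting the bound on $pa$ gives $pa/(1-pa)\le \epsilon/(1+\epsilon)$. Hence $\abs{\pdv{\Loss}{\qas}-D_{\bfs}}\le D_{\max}\,\epsilon/(1+\epsilon)$, which is $\le \epsilon$ under the standing normalization $D_{\max}\le 1$ of the Hadamard test; in general the error is $O(\norm{O}_\infty\epsilon)$ and is absorbed by rescaling $\epsilon\mapsto \epsilon/(2\norm{O}_\infty)$. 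This establishes $\pdv{\Loss}{\qas}\approx D_{\bfs}$, with $D_{\bfs}$ obtained from the single Hadamard test of Figure \ref{fig:loss derivative circuit multi} followed by $O(1)$ classical post-processing.

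The main obstacle here is bookkeeping rather than anything deep: one has to check that the combinatorial factor $p^k$ from the nested sums, the coefficient $2^k/(k+1)!$, and the $\le 1$ bound on the signed products assemble into a genuine geometric series with ratio $pa$ (and not, say, $2pa$, which would shrink the radius of validity and change the constant $1+2\epsilon$ in the hypothesis), and to keep explicit track of the $\norm{O}_\infty$ dependence hidden inside $D_{\max}$ so the advertised additive error is the true one for the normalization of $O$ in force. I would also note that the truncation estimate of Theorem \ref{thm:derivative approx} is useless in this regime --- its error $\epsilon_K$ equals $1$ at $K=0$ --- so the direct tail bound on \eqref{eq:pdv infinite sum} is the right tool, and it is precisely the smallness of the parameters, not the number of truncated terms, that drives the accuracy.
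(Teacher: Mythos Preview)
The paper states this corollary without proof, so there is no argument to compare against directly. Your approach --- identifying the $k=0$ term of \eqref{eq:pdv infinite sum} as $D_{\bfs}$ and bounding the $k\ge 1$ tail by a geometric series in $pa$ --- is correct and is almost certainly the derivation the authors had in mind.

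One minor remark on the constants. You use $2^k/(k+1)!\le 1$ to obtain ratio $pa$, which yields a tail bound $D_{\max}\cdot pa/(1-pa)=D_{\max}\,\epsilon/(1+\epsilon)$. A slight variant that makes the hypothesis constant $1+2\epsilon$ land on the nose is to use instead $1/(k+1)!\le 1/2$ for $k\ge 1$, giving ratio $2pa$ and tail $\le \tfrac12\cdot\frac{2pa}{1-2pa}D_{\max}=\frac{pa}{1-2pa}D_{\max}=\epsilon D_{\max}$ exactly when $pa=\epsilon/(1+2\epsilon)$. Either route works; yours is in fact slightly sharper. Your caveat about the implicit normalization $D_{\max}\le 1$ (equivalently $\norm{O}_\infty\le 1/2$) is well taken --- the paper's corollary is silent on the scale of $O$, so flagging the $\norm{O}_\infty$ dependence is appropriate rather than a gap in your argument.
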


\subsection{Extension to Polynomial Size DLA}\label{sec:DLA}
Next, we extend our results to a more  general class of PQCs for which DLA has $\poly(d)$ dimensionality.
\begin{definition}\label{def:generators}
	For  the unitary of the form $e^{iA(\parameter)}$ where $A(\parameter) = \sum_i a_i G_i$ with $G_i$ being traceless Hermitian operators, the set $\CG = \{G_1, \cdots, G_p\}$ is called the generators of the PQC.
\end{definition}

\begin{definition}[Dynamical Lie Algebra]
	Given a PQC with generators $\CG$ as in Definition \ref{def:generators}, the Dynamical Lie Algebra (DLA) $\g_A$ is the subalgebra of
	$\su(2^n)$ spanned by the repeated nested commutators of the elements in $\CG$, i.e.,
	\begin{align*}
		i\g_A := \emph{Span}_\RR\<iG_1, \cdots, iG_p\>_{\text{Lie}} \subseteq \su(2^n),
	\end{align*}
	where $\<\cdot\>_{\text{Lie}}$ denotes the Lie closure, obtained by repeatedly taking the nested commutators, and the span is over the real numbers. We say that $\g_A$ has polynomial size/dimensionality if $\dim(\g_A) = \poly(n)$ as a vector space.
\end{definition}

In general, computing the DLA is computationally expensive. One viable approach involves direct construction starting from a set of generators and iteratively commuting them to discover new elements until a basis of the DLA is attained (for example see Algorithm 1 in \cite{Larocca2022}). The complexity of this approach in general is $O(2^n)$. However, this is not an issue if the dimension of the DLA to be constructed is promised to by $\poly(n)$.

\begin{theorem}\label{thm:DLA derivative}
	Consider an ansatz of the form $U(\overrightarrow{a})=\exp{i A(\parameter)}$, where $A(\parameter)=\sum_{i} a_i G_i$ with $G_i$ being traceless Hermitian operators. Let $\g_A$ be the DLA of the ansatz with a basis $\{E_1, \cdots, E_{d_{\g_A}}\}$ and $T_A$ be the matrix representation of the linear transformation $\ad_A(\cdot) = [A, \cdot]$ in the basis of $\g_A$.
	Then, $\nabla \Loss(\parameter)=\overrightarrow{R}f(T_A)$, where $f(z) := (1-e^{-z})/z$ for $z\neq 0$ and $f(0) := 1$, and $\overrightarrow{R} = (R_1, \cdots, R_{d_{\g_A}})$ with $R_l = i\tr{O[\rho^{out}, E_l]}$
	
	Moreover, if $\g_A$ has $\poly(n)$ dimensionality, there is an algorithm (Algorithm \ref{alg:DLA}) that estimates $\nabla \Loss(\parameter)$ with $\poly(n)$ copies of $\rho^{out}$ and additional $\poly(n)$ classical time. 

\end{theorem}
\begin{proof}
	The proof of the theorem follows from a similar argument as in Theorem \ref{thm:loss derivative non product} and \ref{thm:subgroup}. First note that from \eqref{eq:pdv L adjoint}
	\begin{align*}
		\frac{\partial \Loss}{\partial a_j} =  i\tr{ O \qty[\rho^{out}, \frac{1-\exp{i\ad_{A}}}{\ad_{A}}(\pdv{A}{a_j})]}.
	\end{align*}
	From the definition of $\frac{1-e^{i\ad_A}}{\ad_A}$ and the fact that $\pdv{A}{a_j}=G_j$, the above quantity decomposes as
	\begin{align*}
		\frac{\partial \Loss}{\partial a_j} = -i\sum_{k=0}^\infty \frac{(i)^k}{(k+1)!}   \tr{ O \qty[\rho^{out}, (\ad_A)^k(G_j)]},
	\end{align*}
	Note that $\ad_A$ maps $\g_A$ unto itself as its \textit{commutator ideal}. That is $$\ad_A(G) = \sum_{j=1}^{m} c_j [V_j, W_j],$$ where $V_j, W_j\in \g_A$. Since $\g_A$ is closed under the Lie bracket, then  $[V_j, W_j]\in \g_A$. Hence, $\ad_A(G)\in \g_A$ and it equals to a finite sum of the basis elements $E_l$ of $\g_A$ as
	\begin{align*}
		\ad_A(G_j)=\sum_{l=1}^{d_\g} \alpha_{l,j} E_l.
	\end{align*}
	Therefore, the above summation can be rewritten as
	\begin{align*}
		\frac{\partial \Loss}{\partial a_j} = -i\sum_{k=0}^\infty \frac{(i)^k}{(k+1)!}  \sum_{l=1}^{d_{\g_A}} \alpha_{l,j} \tr{ O \qty[\rho^{out}, E_l]}.
	\end{align*}

	Note that $\ad_A(G_j)$ can be viewed as a vector $(\alpha_{1,j},\cdots, \alpha_{d_{\g_A},j})$ in the $E_l$ basis. Moreover, $\ad_A$ can be viewed as a linear transformation (matrix) $T_A(\overrightarrow{\alpha})$ in the $E_l$ basis. This matrix is $d_{\g_A}\times d_{\g_A}$. With that taken into account, the partial derivative can be written as
	\begin{align*}
		\pdv{\Loss}{a_{j}} \equiv \sum_{k\geq 0} \frac{(-i)^k}{(k+1)!} T_A^k(\overrightarrow{\alpha}) G_{j}, \qquad \forall j\in [p].
	\end{align*}
	Where $G_j$ is understood as a vector in the $E_l$ basis. Now let $$B(\overrightarrow{\alpha}) = f(T_A(\overrightarrow{\alpha})),$$
	where $f(z) := (1-e^{-z})/z$ for $z\neq 0$ and $f(0) := 1$. This matrix can be computed classically in $\poly(d_{\g_A})=\poly(n)$ time. Note that the above summation is exactly the Taylor expansion of $f(T_A(\overrightarrow{\alpha}))$, and hence we have
	\begin{align*}
		\pdv{\Loss}{a_{j}} = \sum_{l=1}^{d_{\g_A}}  B_{(j,l)}(\overrightarrow{\alpha})  i\tr{O[E_l, \rho^{out}]},
	\end{align*}
	where $B_{(j,l)}(\overrightarrow{\alpha})$ is the $(j,l)$ entry of the matrix $B(\overrightarrow{\alpha})$.
	  To compute the gradient from this vector representation, one first needs to estimate all $R_{l}:=i\tr{ O \qty[\rho^{out}, E_l]}$ and compute the following
	\begin{align*}
		\nabla\Loss =\overrightarrow{R}  B(\overrightarrow{\alpha}) .
	\end{align*}
	With that the proof is complete.
\end{proof}

The following algorithm summarizes this procedure to estimate the gradient.
\begin{algorithm}[ht]
	\caption{DLA Gradient Estimation}\label{alg:DLA}
	\textbf{Input:} DLA generators $E_1, \cdots, E_{d_{\g_A}}$,

	\begin{algorithmic}[1]
		\STATE Estimate the expectation value of the tests $R_l:= i\tr{O[E_l, \rho^{out}]}$ for $l\in [d_{\g_A}]$.
		\STATE Compute the $d_\g\times d_{\g_A}$ matrix $B(\overrightarrow{\alpha})=f(T_A(\overrightarrow{\alpha}))$, with $f(z) := (1-e^{-z})/z$ for $z\neq 0$ and $f(0) := 1$.
		\State \textbf{Return} $\widehat{\nabla\Loss} =  \overrightarrow{R}B(\overrightarrow{\alpha})$.
	\end{algorithmic}
\end{algorithm}

Extension to PQC with mixture of different DLA is also possible. For example, consider a PQC of the form $U(\parameter) = \exp{i A_1(\parameter)}\exp{i A_2(\parameter)}$ where $A_1(\parameter) = \sum_i a_i G_i$ and $A_2(\parameter) = \sum_j a^\prime_j G^\prime_j$ with $\{G_i\}$ and $\{G^\prime_j\}$ being the generators of two different DLA's $\g_{A_1}$ and $\g_{A_2}$. If both $\g_{A_1}$ and $\g_{A_2}$ have polynomial size/dimensionality, then one can compute the gradient with polynomial number of measurement tests and additional polynomial classical time. The proof follows from a similar argument as in Theorem \ref{thm:DLA derivative}.

\section{Concluding Remarks}
This paper provides a framework to estimate the gradient of generic PQCs via Hadamard tests for Pauli operators followed by classical post-processing. It is shown that the proposed approach is polynomial in classical and quantum resources when the DLA of the associated Hamiltonian of the PQC has a dimensionality polynomial in the number of qubits. Moreover, this method does not change the ansatz structure and can be used to reduce the measurement shot complexity to scale logarithmically with the number of parameters. The results would be beneficial in various optimization or learning quantum algorithms that rely on the estimation of the gradient.

Our results are steps toward a unified framework for efficiently estimating the gradient of an arbitrary parameterized circuit without making changes to its unitary. 
One limitation of this work is when the Hamiltonian has exponentially many Pauli terms. In that case, we can only approximate the gradient by truncation of the nested summations in Theorem \ref{thm:loss derivative non product} to a fixed number of terms. However, this will be a biased approximation.

As future work, one can extend the proposed framework to the estimation of higher-order derivatives such as the Hessian or the Fubini-Study metric defined for a parameterized quantum system. For that, one needs to define a second-order Hadamard test strategy for measuring the double derivatives. Another future work is to extend this strategy to multi-layered PQCs where each layer might be a black box unitary. Finally, deriving lower bounds on the classical and quantum resources needed to estimate the gradient or higher-order derivatives is another important direction.

\section*{Acknowledgments}
This work is partially supported by the NSF Center for Science of Information (CSoI) Grant
CCF-0939370, and also by NSF Grants CCF-2006440 and and CCF-2211423.

\bibliographystyle{quantum}
\bibliography{main.bbl}
\newpage
\appendix
\section{Derivation of The Gradient for a Single Qubit PQC}\label{sec:example single qubit}
	Consider a general single-qubit unitary of the form
	\begin{equation*}
		U(\overrightarrow{a}) = \exp{i(a_1 \sigma^1 + a_2 \sigma^2 +a_3\sigma^3)}.
	\end{equation*}
	Let $O$ be a generic observable and consider the associated loss $\Loss(\parameter)$ as in \eqref{eq:VQA cost function}. As an illustrative  example, we examine the partial derivative of $\Loss$ with respect to $a_1$, evaluated  at the point $a_1=0$ and $a_3=0$. In the following we compute this derivative using two approaches: first, by applying Theorem \ref{thm:loss derivative non product}, and second,through direct calculation.
    
    \paragraph{Closed-form expression based on Theorem \ref{thm:loss derivative non product}.}
    In the context of Theorem \ref{thm:loss derivative non product}, let $D_j$ be the result of the Hadamard test with Pauli $\sigma^j$, where $j=1,2,3$. Then, \eqref{eq:pdv infinite sum}  in Theorem \ref{thm:loss derivative non product} simplifies to the following:
	\begin{align*}
		\frac{\partial \Loss}{\partial a_1}(\parameter = (0,a_2,0)) = \sum_{k=0}^\infty \frac{(2i)^k}{(k+1)!}  \qty(\prod_{j=1}^k a_{2}  i^{\ordersign{2, (\overbrace{2 \oplus \cdots \oplus 2}^{j-1 ~\text{times}}\oplus 1)}})   D_{(\underbrace{2 \oplus \cdots \oplus 2}_{k ~\text{times}}\oplus 1)},
	\end{align*}
	where we used the fact that only terms with $\bfs_j = 2$ are surviving. Because $a_1=a_3=0$.
	Note that for even $j$ we have $\underbrace{2 \oplus \cdots \oplus 2}_{j-1 ~\text{times}} \oplus 1 = 3$, and for odd $j$ it is equal to $0\oplus 1 = 1$. Therefore,
	\begin{align*}
		\ordersign{2, (\underbrace{2 \oplus \cdots \oplus 2}_{j-1 ~\text{times}}\oplus 1)} = \begin{cases}   \ordersign{2, 3} = 1 & \text{even}~ k \\
			\ordersign{2, 1}  =-1 & \text{odd}~ k
		                                                                                 \end{cases}
	\end{align*}
	where we used \eqref{eq:ordersign}. Plugging it in the first equation, we have

	\begin{align*}
		\prod_{j=1}^k a_{2} \qty( i^{\ordersign{2, (\overbrace{2 \oplus \cdots \oplus 2}^{k-1 ~\text{times}}\oplus 1)}})  = a_2^k (-i)\times i \times (-i) \times \cdots
		 & = \begin{cases} (a_2i)^k  (-1)^{k/2}     & \text{even}~ k \\
              (a_2i)^k  (-1)^{(k+1)/2} & \text{odd}~ k.
		     \end{cases}
	\end{align*}

As a result, the partial derivative simplifies to 
	\begin{align*}
		\frac{\partial \Loss}{\partial a_1}(\parameter = (0,a_2,0)) =
		\sum_{p=0}^\infty \frac{(-2)^{2p}}{(2p+1)!} a^{2p}_{2} (-1)^p  D_{1} + \sum_{q=0}^\infty \frac{(-2)^{2q+1}}{(2q+2)!} a^{2q+1}_{2} (-1)^{q+1}  D_{3}.
	\end{align*}
	Hence, one needs to measure $D_1$ and $D_3$ to compute the above partial derivative.  Next, by simplifying the summations, it is not difficult to show that
	\begin{align*}
		\frac{\partial \Loss}{\partial a_1}(\parameter = (0,a_2,0)) & =   \frac{-1}{2a_2} \qty( \sum_p \frac{(-2a_2)^{2p+1}}{(2p+1)!} (-1)^p )D_1 + \frac{-1}{2 a_2} \qty( \sum_q \frac{(-2a_2)^{2q+2}}{(2q+2)!} (-1)^{q+1} )D_3 \\
		                                                            & = \frac{-1}{2a_2} \qty\big( \sin(-2a_2) D_1  +  (\cos{(-2a_2)}-1) D_3)                                                                                     \\
		                                                            & = \frac{1}{2a_2} \qty\big( \sin(2a_2) D_1  +  (1- \cos{(2a_2)}) D_3).
	\end{align*}
	Notice the presence of $D_3$ which relates to the Pauli $\sigma^3$ and not appear in the ansatz expression.
	One can verify that this is indeed equal to the analytic gradient of this ansatz.

\paragraph{Direct derivation.}
Note that from \eqref{eq:rhou out derivative} the partial derivative of the objective function can be written as 
\begin{align*}
\frac{\partial \Loss}{\partial \qas} = \tr{ O \qty(\frac{\partial U}{\partial \qas} \qty(\rho U^\dagger) + \qty(U\rho) \pdv{U^\dagger}{\qas})},
\end{align*}
Given that $\pdv{U^\dagger}{\qas}=(\pdv{U}{\qas})^\dagger$. Then, by denoting $\tilde{U}=\pdv{U}{\qas}$ we have that 
\begin{align*}
\pdv{\Loss}{\qas} &= \tr{ O \tilde{U}\rho U^\dagger + U\rho\tilde{U}^\dagger}.
\end{align*}
Next, as $UU^\dagger = I$, we have 
\begin{align*}
 \pdv{\Loss}{\qas} &= \Tr{O\left( \tilde{U} U^\dagger (U \rho U^\dagger) + (U \rho U^\dagger) U \tilde{U}^{\dagger}\right)} \\
    &= \Tr{O\left( \tilde{U} U^\dagger \rho^{out} + \rho^{out} (\tilde{U} U^\dagger)^\dagger\right)},
\end{align*}
where $\rho^{out}$ is the ansatz output. 
Note that the single qubit ansatz  can also be written as
\begin{equation} \label{eq:single qubit unitary}
    U(\overrightarrow{a}) = I\cos{\theta} + i\left( \sum_{s \in \{0,1,2,3\}} \hat{a}_s \sigma^s\right) \sin{\theta},
\end{equation}
where $\theta = \sqrt{\sum a_s^2}$ is a normalizing parameter and $\hat{a}_s = \frac{a_s}{\theta}$. Now, we can differentiate $U$ with respect to a single parameter $a_s$ appearing in the sum:
\begin{equation} \label{eq:deriv single unitary}
\begin{split}
    \frac{\partial U(\overrightarrow{a})}{\partial a_s} = (-\frac{a_s}{\theta}\sin{\theta}) I &+ i\left( \sum_{s^\prime \neq s}\frac{-a_sa_{s^\prime}}{\theta^3}\sigma^{s^\prime}  + \frac{\theta^2 - a_s^2}{\theta^3}\sigma^s\right)\sin{\theta} \\ &+ i \left( \frac{a_s\cos{\theta}}{\theta}\sum_{s^\prime} \hat{a}_{s^\prime} \sigma^{s^\prime}\right).
\end{split}
\end{equation}
Therefore,
\begin{align*}
	\left.\frac{\partial U(\parameter)}{\partial a_1}\right|_{a_1=a_3=0} = i\sigma^1 \frac{\sin a_2}{a_2}.
\end{align*}

Using \eqref{eq:single qubit unitary} to find $U^\dagger$, we have that 
\begin{equation*}
    \tilde{U} U^\dagger = i \frac{\sin{a_2}}{a_2} \left( \cos{a_2} \sigma^1 + \sin{a_2} \sigma^3\right).
\end{equation*}
Which when plugged into the derivative expression gives
\begin{equation*}
    \left.\frac{\partial L}{\partial a_1}\right|_{a_1=0} = \frac{\sin{a_2}}{a_2} \left( \cos{a_2} D_1 + \sin{a_2} D_3 \right)= \frac{\sin{2a_2}}{2a_2} D_1 + \frac{(1-\cos 2a_2)}{2a_2}D_3.
\end{equation*}
This is identical to the expression based on Theorem \ref{thm:loss derivative non product}.

\section{Summary of Related Works}\label{sec:related}

\paragraph{Parameter Shift Rule.}
There have been several approaches to estimating the gradient \cite{Farhi2018,Mitarai2018,Sweke2020,HeidariAAAI2022,Harrow2021,Schuld2018,Mitarai2019,Wiersema2024}. The zeroth-order approach (e.g., finite differences) evaluates the objective function in the neighborhood of the parameters.  Although it is a generic approach, recent studies showed their drawbacks in terms of convergence rate \cite{Harrow2021}.  First-order methods (e.g., parameter shift rule)  directly calculate the partial derivatives \cite{Schuld2018}. When the ansatz is of the form $U(\parameter) = \prod_{j=1}^p e^{-i a_j \sigma^{\bfs_j}}$, the parameter shift rule implies that:
\begin{equation*}
	\pdv{\Loss}{a_j} = \<\Loss(\parameter + \frac{\pi}{4}e_j)\> -\<\Loss(\parameter - \frac{\pi}{4}e_j)\>,
\end{equation*}
where $e_j\in \RR^p$ is the $j$th canonical vector, that is $e_{j,j}=1$ and $e_{j,r}=0$ for all $r\neq j$.

\paragraph{Stochastic \ac{PSR}:}
Stochastic parameter shift rule was proposed in \cite{Banchi2021} to measure the derivative for unitary operators of the form $e^{i (\theta \qps + B)}$, where $\qps$ is a Pauli string and $B$ is an arbitrary Hermitian operator. The authors showed that the derivative is equal to
\begin{align*}
	\pdv{\Loss}{\theta}  = \int_0^1 C_+(\theta, t) - C_-(\theta, t) d t,
\end{align*}
where $C_\pm(\theta, t) : = \tr{O V_\pm(\theta, t) \rho V_\pm^\dagger(\theta, t)}$, with
\begin{align*}
	V_\pm(\theta, t) :=  e^{i t (\theta \qps + B)}e^{\pm i \qps} e^{i(1-t)(\theta \qps + B)}.
\end{align*}
Then the authors present a Monte Carlo strategy to estimate this integral. A more general variant of the parameter shift rule has been introduced \cite{Wierichs2022}. The considered general ansatz of the form $U(\theta) = e^{i (\theta A + B)}$ for generic Hermitian $A$ and $B$. Given the spectral decomposition of $A$ the paper provides an explicit formula for the derivative of the objective function. This is done via a Discrete Fourier series approach.

Such methods for generic ansatz $e^{i (\theta A + B)}$ require not only to perturb the parameters but also to change the unitaries involved. As a result,
to modify not just the parameters of the \ac{PQC}, but also change the unitaries that appear. In practice, such modifications require a re-evaluation of the schedule of the underlying quantum-control system and hence are at a disadvantage. Moreover, the stochastic \ac{PSR} has a high estimation variance. This is because the above integral is estimated by sampling values of $s$ uniformly in the interval $(0, 1)$ and then
calculating the costs with a finite-shot estimate. In addition, this method leads to a bigger number of unique circuits to compute the derivative, increasing the compilation overhead for both hardware and simulator implementations.

\paragraph{Nyquist \ac{PSR}:} Recently  \cite{Theis2023} proposed a ``proper" shift rule for \acp{PQC} of the form $e^{i (\theta A + B)}$ where only the parameters are shifted without any other modifications of the ansatz. The method was called \textit{Nyquist parameter shift rule} and relies on a beautiful connection between the Nyquist-Shannon Sampling theorem and the Fourier series that was observed earlier in \cite{Wierichs2022,Vidal2018}. This paper shows that if $f(x) = tr\{O U \rho U^\dagger \}$ with $U=e^{ix H + B}$ and $K$ being the difference between the maximum and minimum eigenvalues of $A$,  then the Fourier spectrum of $f$ is contained in $[-K, K]$. Hence, the Nyquist-Shannon Sampling theorem can be used to estimate the derivative of the objective function.  The proposed method, however, has a low approximation error when the parameter value is large enough. More precisely, the approximation error is $O(\frac{1}{c^2})$ as long as $\theta = (1 - \Omega(1))c$, where $c$ is the maximum magnitude of a parameter value. Our method is suitable when the parameter value is small.

\paragraph{Trotterization.}
In this approach the non-product unitary is approximated via the Suzuki-Trotter transformation \cite{Suzuki1976} which states that for any operators $A_1, A_2, \cdots, A_k$, that do not necessarily commute with each other, the following holds
\begin{equation*}
	\lim_{n\rightarrow \infty} \Big(\prod_{j=1}^k \exp{A_j/n }\Big)^n = \exp\big\{\sum_{j=1}^k A_j\big\}.
\end{equation*}
The Trotter formula has been used in literature to derive approximations for generic PQCs \cite{Yang2022,Liu2020,Miessen2021}. However, implementing the above approximation may lead to a high gate complexity and is not preferable in scenarios where direct implementation is available.

\section{Classical Shadow Tomography}\label{sec:CST}
For completeness, in this section we briefly describe the classical shadow tomography procedure. For more details see \cite{Huang2020}. Classical shadow tomography is a technique used in quantum computing to efficiently learn properties of a quantum state using only a few measurements. It was introduced to extract useful information from quantum states without requiring a full quantum state tomography, which is costly in terms of the number of measurements and computational resources. 

More precisely, let $O_j,  j\in[M]$ be a set of observables. The goal is to estimate the expectation value of these observable for measuring an unknown state $\rho$ in a Hilbert space $\mathcal{H}$. For that, several copies of $\rho$ are provided. CST is a procedure with minimal sample (copy) complexity.

\begin{theorem}[\cite{Huang2020}]
	Suppose the observables $O_j, j\in M$ are traceless, then the expectation values $\tr{O_j\rho}, j\in [M]$ can be approximated up to an additive error $\epsilon$ with probability $(1-\delta)$ given $$O\qty(\frac{1}{\epsilon^2} \log \frac{M}{\delta} \max_j \norm{O}_{\text{shadow}}^2)$$ copies of $\rho$.
\end{theorem}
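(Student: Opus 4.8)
The plan is to reconstruct the classical-shadow argument of \cite{Huang2020}: build from single-copy randomized measurements an estimator of each $\tr{O_j\rho}$ that is unbiased and whose variance is controlled by $\norm{O_j}_{\text{shadow}}^2$, then amplify the confidence with a median-of-means construction and close with a union bound over the $M$ observables.

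First I would set up the shadow snapshot. Fix a tomographically complete unitary ensemble $\mathcal{U}$ (e.g.\ the $d$-qubit Clifford group). Given one copy of $\rho$, sample $U\sim\mathcal{U}$, apply $U$, measure in the computational basis to get $\hat{b}\in\set{0,1}^d$, and record $\hat{\rho}:=\mathcal{M}^{-1}\!\big(U^\dagger\ketbra{\hat{b}}U\big)$, where $\mathcal{M}(\cdot):=\EE_{U,\hat{b}}\big[U^\dagger\ketbra{\hat{b}}U\big]$ is the measurement channel. The first step is to check that $\mathcal{M}$ is an invertible linear superoperator on Hermitian operators (this is exactly where tomographic completeness, i.e.\ the $2$-design property of $\mathcal{U}$, enters), so that $\hat{\rho}$ is well defined and, crucially, $\EE[\hat{\rho}]=\rho$. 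Consequently $\hat{o}_j:=\tr{O_j\hat{\rho}}$ is an unbiased estimator of $\tr{O_j\rho}$ for every $j$, and a single snapshot can be reused for all $M$ observables.

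Next I would bound the variance. Since the $O_j$ are traceless, $\mathrm{Var}(\hat{o}_j)\le\EE\big[\tr{O_j\hat{\rho}}^2\big]=:\norm{O_j}_{\text{shadow}}^2$, and the crux is to bound this quantity for the chosen ensemble --- for random Cliffords one obtains $\norm{O_j}_{\text{shadow}}^2\le 3\,\tr{O_j^2}$. I expect this to be the main obstacle: it amounts to evaluating $\EE_U\big[(U^\dagger\ketbra{\hat{b}}U)^{\otimes 2}\big]$ via Weingarten / $3$-design calculus on the Clifford group, and getting the right constant requires some care. Granting the variance bound, I would split $N=nK$ independent snapshots into $K$ batches of size $n$, average within each batch to get $\bar{o}_j^{(k)}$ (variance $\le\norm{O_j}_{\text{shadow}}^2/n$), and take $n=\Theta\!\big(\max_j\norm{O_j}_{\text{shadow}}^2/\epsilon^2\big)$ so that Chebyshev gives $|\bar{o}_j^{(k)}-\tr{O_j\rho}|\le\epsilon$ with probability at least $3/4$ for every pair $(j,k)$.

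Finally I would take $\hat{o}_j:=\mathrm{median}\{\bar{o}_j^{(k)}:k\in[K]\}$: a Chernoff bound on the number of batches that land within $\epsilon$ shows the median misses by more than $\epsilon$ with probability at most $e^{-\Omega(K)}$, so choosing $K=\Theta(\log(M/\delta))$ and union-bounding over $j\in[M]$ makes all $M$ estimates $\epsilon$-accurate simultaneously with probability at least $1-\delta$. The total number of copies consumed is $N=nK=O\!\big(\epsilon^{-2}\log(M/\delta)\,\max_j\norm{O_j}_{\text{shadow}}^2\big)$, which is precisely the claimed bound; the uniform choice of batch size is what lets the same sample serve all observables, and the traceless hypothesis is what keeps the per-observable variance equal to the shadow norm without an extra additive term.
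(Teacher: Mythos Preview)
The paper does not actually prove this theorem: it is stated in Appendix~\ref{sec:CST} as a cited result from \cite{Huang2020}, and the surrounding text only describes the shadow-tomography procedure (random unitary, computational-basis measurement, inverse channel $\mathcal{M}^{-1}$, Definition~\ref{def:shadow norm}) without deriving the sample-complexity bound. Your reconstruction is correct and is precisely the argument of the original reference --- unbiased snapshot, variance controlled by the shadow norm, then median-of-means plus a union bound --- so there is nothing to compare against within this paper.

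One small correction: you write $\mathrm{Var}(\hat{o}_j)\le\EE\big[\tr{O_j\hat{\rho}}^2\big]=:\norm{O_j}_{\text{shadow}}^2$, but by Definition~\ref{def:shadow norm} the shadow norm is the \emph{maximum} of that second moment over all input states $\sigma$, not its value at the particular $\rho$; so the correct chain is $\mathrm{Var}(\hat{o}_j)\le\EE_\rho\big[\tr{O_j\hat{\rho}}^2\big]\le\norm{O_j}_{\text{shadow}}^2$. Also, tomographic completeness alone gives invertibility of $\mathcal{M}$, but the concrete bound $\norm{O_j}_{\text{shadow}}^2\le 3\tr{O_j^2}$ for Clifford measurements uses the $3$-design property, not just $2$-design; you flag this as the main technical step, which is accurate.
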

The shadow norm is a measure that resembles the variance in the worst case state.  

In what follows we describe the steps in this procedure for a generic state $\rho$ in a Hilbert space $\mathcal{H}$.. 

First, generate a  unitary operator $U$ randomly from a class of choices $\mathcal{U}$ to be determined.  %
Apply $U$ on the input state resulting in the state $U^\dagger \rho U$. Measure the resulted state in the canonical basis $\ket{j}, j \in [\dimH]$. From Born’s rule the probability of getting  the output $j$ is $p_{j}= \matrixelement{j}{U^{\dagger}\rho U}{j}$. 
Given an outcome $j$, define  $\omega_j = U\ketbra{j}U^\dagger.$ %
The expectation   $\EE_{\sim (J, U)}[\omega_J]$  over the measurement randomness ($p_j$) and the choice of  $U$ equals to $\CM[\rho],$ where $\CM$ is a mapping defined as
\begin{align}\label{eq:Gamma}
\CM[O]:=\EE_{U}\Big[\sum_{j\in [\dimH]} \matrixelement{j}{U^{\dagger}O U}{j}~ U\ketbra{j}U^\dagger\Big],
\end{align}
for any operator $O$ on $\mathcal{H}$. 
Observe that $\CM$ is a linear mapping on $\mathcal{B}(\mathcal{H})$ and hence has an inverse denoted by $\CM^{-1}$. We note that $\CM^{-1}$ is the shadow channel $\mathcal{M}^{-1}$ introduced in \cite{Huang2020}. %
We apply $\CM^{-1}$ on $\omega_j$ resulting in the so called shadow
\begin{align}\label{eq:rho hat shadow}
\hat{\rho}:=\CM^{-1}\big[U\ketbra{j}U^\dagger\big].
\end{align}
Note that $\hat{\rho}$ is a classical matrix and hence can be copied several times. Moreover, $\hat{\rho}$ is not a valid density operator as it is not necessarily a positive semi-definite matrix. However, it is an unbiased estimate of the original state. 

When $\mathcal{U}$ is tomographically complete, the classical shadow $\hat{\rho}$ is unbiased, that is $\EE_{U, J}[\hat{\rho}]=\rho$. %

\begin{definition} \label{def:shadow norm}
	The shadow norm of any operator $O$ on $\mathcal{H}$ is %
	\small
	\begin{align*}
	\norm{O}_{\emph{shadow}}: = \max_{\sigma\in \mathcal{D}[H]}\Big(\sum_{j\in [\dimH]}\hspace{-10pt} \matrixelement{j}{U^{\dagger}\sigma U}{j}~ \expval{U\CM^{-1}[O]U^\dagger}{j}^2\Big)^{1/2}. 
	\end{align*}
	\normalsize
	\end{definition}

\subsection{CST with Pauli Measurements}\label{subsec:CST Pauli}
\noindent\textbf{Shadow tomography with Pauli measurements.} Suppose the observable $O_j$ act non trivially on at most $k$ qubits. For that $V$ in CST is the tensor product of randomly chosen Pauli operators:
\begin{align*}
	V = V_1\tensor \cdots \tensor V_d \in CL(2)^{\tensor d},
\end{align*}
where each $V_j$ is chosen randomly and uniformly from the Clifford group $CL(2)$. In this case, $\CU=CL(2)^{\tensor d}$. Moreover,  the shadow matrix is computed as
\begin{align*}
	\hat{\rho}:=\bigotimes_{j=1}^d \qty(3V_j^\dagger \ketbra{\hat{b}_j}V_j - I).
\end{align*}
In that case, the shadow norm is bounded by the locality of the observables as 
\begin{align*}
	\norm{O_j}_{\emph{shadow}}\leq 2^k\norm{O_j}_\infty
\end{align*}
As a result the sample complexity of CST is given by   $$n=O\qty(\frac{4^k}{\epsilon^2}\log m \max_j \norm{O_j}^2_\infty).$$
  The CST algorithm runs in $\tilde{O}\qty(2^{\Theta(k)}m\log m)$ classical time.

\subsection{CST with Clifford Measurements}
The \textit{Clifford group} is a set of unitary operations that map Pauli operators to other Pauli operators under conjugation. For a system of \(d\) qubits, the Clifford group \( \mathcal{C}_d \) consists of unitaries \(U\) such that for any Pauli string \( P \):
\[
U P U^\dagger = P',
\]
where \(P'\) is another Pauli operator.

Clifford circuits are particularly useful because they can be efficiently simulated classically, and their structure allows for easy manipulation of Pauli observables, making them useful for shadow tomography. In that case, $V$ is chosen randomly from the Clifford group. The shadow norm is bounded as 
\begin{align*}
	\norm{O_j}_{\emph{shadow}}\leq \sqrt{3\tr{O^2}},
\end{align*}
when $\tr{O^2}<\infty$. The shadow matrix is given by 
\begin{align*}
	\hat{\rho} = (2^d+1) V^\dagger \ketbra{\hat{b}}V - I.
\end{align*}
As a result the sample complexity of CST is bounded as   $$n=O\qty(\frac{1}{\epsilon^2}\log m \max_j \tr{O^2_j}),$$
and the CST algorithm runs  $\tilde{O}\qty(2^{\Theta(d)}m \log m)$ classical time.

\section{Technical Lemmas and Proofs}\label{sec:technical}
In this section we present some of the technical lemmas and their proofs.

\subsection{Proof of Lemma \ref{lem:tr Hs}}\label{subsec:proof:lem:tr Hs}
\begin{customlemma}{\ref{lem:tr Hs}}
    The partial derivative observables $H_\bfs$ satisfy  $\tr{H_\bfs^2}\leq 4\tr{O^2}$ for any $\bfs\in \set{0,1,2,3}^n$. 
\end{customlemma}
\begin{proof}
    For the proof of the second claim note that
    \begin{align*}
        H_\bfs^2 = R_{\bfs,+}^\dagger O^2 R_{\bfs,+} + R_{\bfs,-}^\dagger O^2 R_{\bfs,-} - R_{\bfs,+}^\dagger O R_{\bfs}(\pi) O R_{\bfs,-} - R_{\bfs,-}^\dagger O R_{\bfs}(-\pi) O R_{\bfs,+}
    \end{align*}
    Taking the trace of this operator gives 
    \begin{align*}
        \tr{H_\bfs^2} =2 \tr{O^2} - \tr{ O R_{\bfs}(\pi) O R_{\bfs}(-\pi)} - \tr{O R_{\bfs}(-\pi) O R_{\bfs}(\pi)}.
    \end{align*}
    Note that $R_{\bfs}(\pi)= - i\qps$ and $R_{\bfs}(-\pi)=  i\qps$. To see this, note that $\qps = \sum_b \lambda_b \ketbra{b}$, where $\ket{b}$ is an eigenstate of $\qps$ and $\lambda_b \in \{-1,1\}$ is the corresponding eigenvalue. As a result,
    \begin{align*}
        R_{\bfs}(\pi) = e^{-\frac{\pi}{2}\qps} = \sum_{b} e^{-i\frac{\pi}{2}\lambda_b} \ketbra{b} =  \sum_{b} (i)^{-\lambda_b} \ketbra{b} = -i \qps
    \end{align*}
    where we used the fact that $i^x = xi$ for $x=\pm 1$. Similarly, $R_{\bfs}(-\pi)=i\qps$. Therefore,   
    \begin{align*}
        \tr{ O R_{\bfs}(\pi) O R_{\bfs}(-\pi)} = \tr{ O R_{\bfs}(-\pi) O R_{\bfs}(\pi)} = \tr{ O \qps O \qps}. 
    \end{align*}
    The observable $O$ has a Pauli decomposition of the form $O=\sum_{\bft \in \set{0,1,2,3}^d} a_\bft \sigma^\bft$ with $a_\bft\in \RR$. Then the above quantity is equal to 
    \begin{align*}
        \tr{ O \qps O \qps} = \sum_\bft \sum_\bfr a_\bft a_\bfr \tr{\sigma^\bft \qps \sigma^\bfr \qps}.
    \end{align*}
    Note that the trace term in the summation is zero unless $\bft=\bfr$. Also note that for any $i,j \in \{0,1,2,3\}$
    \begin{align*}
        \tr{\sigma^i\sigma^j\sigma^i\sigma^j} = \begin{cases}
            -2 & \text{if}~ i\neq j\neq 0\\
            2 & \text{otherwise.}\\
        \end{cases}
    \end{align*}
    Therefore, the above summation equals to 
    \begin{align*}
        \tr{ O \qps O \qps} =  2^n \sum_\bft (a_\bft)^{2} (-1)^{\abs{\set{i: t_i\neq s_i\neq 0}}}.
    \end{align*}
   On the other hand from a Parseval-type identity we know that $\tr{O^2}=  2^n \sum_\bft (a_\bft)^{2}$.  Putting everything together we have the expression for $\tr{H_\bfs^2}$:
   \begin{align*}
    \tr{H_\bfs^2} &= 2^{n+1} \sum_\bft (a_\bft)^{2} \qty(1-(-1)^{\abs{\set{i: t_i\neq s_i\neq 0}}})\\
                & \leq 2^{n+2} \sum_\bft (a_\bft)^{2}\\
                & = 4\tr{O^2}.
   \end{align*}
   where the last inequality holds because $1-(-1)^x \leq 2$ for any non-negative integer $x$.
\end{proof}

\subsection{Proof of Lemma \ref{lem:Bnorm-subgroup}}\label{subsec:proof:lem:Bnorm-subgroup}
\begin{customlemma}{\ref{lem:Bnorm-subgroup}}
Let $V\in \CC^{p\times p}$ be the matrix from Theorem \ref{thm:subgroup}, and $f(z) := (1-e^{-z})/z$ for $z\neq 0$ and $f(0) := 1$. Then,
\[
\|f(V)\| \le \frac{e^{\|V\|}-1}{\|V\|}.
\]
Moreover, if $|a_{\bfs}|\le c$ for all $\bfs\in\CS$, then
\[
\|f(V)\| \le \frac{e^{2pc}-1}{2pc}.
\]
\end{customlemma}

\begin{proof}
The power series expansion of $f$ around $0$ is given by
\[
f(z)= -\sum_{k=0}^\infty \frac{z^k}{(k+1)!}.
\]
Hence, for every matrix $V\in \CC^{p\times p}$,
\[
f(V)= -\sum_{k=0}^\infty \frac{V^k}{(k+1)!}.
\]

On the other hand, from the power series expansion of the matrix exponential $e^{tV}$, we have
\[
\int_0^1 e^{tV}\,dt
=
\int_0^1 \sum_{k=0}^\infty \frac{t^kV^k}{k!}\,dt
=
\sum_{k=0}^\infty \frac{V^k}{k!}\int_0^1 t^k\,dt
=
\sum_{k=0}^\infty \frac{V^k}{(k+1)!}.
\]
Thus
\[
f(V)=-\int_0^1 e^{tV}\,dt.
\]

Taking operator norms and using the triangle inequality,
\[
\|f(V)\|
\le \int_0^1 \|e^{tV}\|\,dt.
\]
Moreover, by the power-series expansion and submultiplicativity of the operator norm,
\[
\|e^{tV}\|
\le
\sum_{k=0}^\infty \frac{\|tV\|^k}{k!}
=
e^{t\|V\|}.
\]
Hence
\[
\|f(V)\|
\le
\int_0^1 e^{t\|V\|}\,dt
=
\begin{cases}
\dfrac{e^{\|V\|}-1}{\|V\|}, & \|V\|>0,\\[1ex]
1, & \|V\|=0.
\end{cases}
\]

It remains to bound $\|V\|$. Fix a column index $j$. Since the map
$\bfs\mapsto \bfs\oplus \bfs_j$ is a permutation of $\CS$, each coordinate of
$\bfv_j$ has magnitude $2|a_{\bfs}|$ for exactly one $\bfs\in\CS$. Therefore,
\[\sum_{r=1}^p |V_{rj}| = 2\sum_{\bfs\in\CS}|a_{\bfs}|.\]
This implies that the induced $\ell_1$-norm of $V$ equals $\norm{V}_{\ell_1}= 2\sum_{\bfs\in\CS}|a_{\bfs}|$.

The same argument applied row-wise shows	
\[\norm{V}_{\ell_\infty}=\max_{j}\sum_{r=1}^p |V_{rj}|=\max_{r}\sum_{j=1}^p |V_{rj}|= 2\sum_{\bfs\in\CS}|a_{\bfs}|.\]
Thus, from the identity $\|V\| \le \sqrt{\|V\|_{\ell_1}\|V\|_{\ell_\infty}}$, we have
\[\|V\| \le 2\sum_{\bfs\in\CS}|a_{\bfs}|.\]
Therefore, assuming $|a_{\bfs}|\le c$ for all $\bfs\in\CS$, we have
$\|V\| \le 2pc.$ Substituting this into the previous estimate gives the desired bound.
\end{proof}

\end{document}